\documentclass[twocolumn, journal]{IEEEtran}
\usepackage{ifpdf}
\usepackage{cite}
\usepackage{bbm}
\ifCLASSINFOpdf
\usepackage{graphicx}
\usepackage{subfigure}
\usepackage{epstopdf}
\usepackage{xcolor}
\else
\fi
\usepackage{float}
\usepackage{amsmath}
\usepackage{amssymb}
\usepackage{amsthm}
\usepackage{mathtools}
\DeclareMathOperator*{\argmin}{arg\,min}
\DeclareMathOperator*{\argmax}{arg\,max}
\newtheorem{lemma}{Lemma}
\newtheorem{prop}{Proposition}
\newtheorem{thm}{Theorem}
\newtheorem{remark}{Remark}

\usepackage{algorithm}
\usepackage{algpseudocode}

\allowdisplaybreaks
\usepackage{array}
\usepackage{stfloats}
\usepackage{tikz}

\usepackage[shortlabels]{enumitem}
\usepackage{comment}
\usepackage{multirow}
\usepackage{cleveref}
\usepackage{booktabs}
\usepackage[normalem]{ulem}

\begin{document}
\title{Markov-Enforced Discrete Diffusion Model for Digital Semantic Symbol Error Correction}

%
%
%

\author{
	\IEEEauthorblockN{Yoon Huh},
    \IEEEauthorblockN{Jeongho Kang},~\IEEEmembership{Graduate Student Member,~IEEE},
	and
	\IEEEauthorblockN{Wan Choi},~\IEEEmembership{Fellow,~IEEE}
    \thanks{Received 16 March 2026; revised 13 July 2026 and 4 September 2026; accepted 9 September 2026. This work was supported by Samsung Electronics Co., Ltd (IO251211-14376-01). The associate editor coordinating the review of this article and approving it for publication was L. Zhou. (\emph{Corresponding Authors: Wan Choi})}
	\thanks{Y.~Huh, J.~Kang and W.~Choi are with the Department of Electrical and Computer Engineering, Seoul National University (SNU), and the Institute of New Media and Communications, SNU, Seoul 08826, Korea. (e-mail: \{mnihy621, ttomas11, wanchoi\}@snu.ac.kr)}
        \vspace{-3mm}

}

%

\maketitle

\begin{abstract}
Diffusion models (DMs) have achieved remarkable success across various domains owing to their strong generative and denoising capabilities. Meanwhile, semantic communication based on neural joint source–channel coding (JSCC) has emerged as a promising paradigm for robust and efficient image transmission. However, severe channel noise can still distort the transmitted semantic symbols, resulting in significant performance degradation. Applying DMs to digital semantic symbols, particularly in vector quantization (VQ)-based systems, is fundamentally challenging because the Markov assumption does not hold for the symbol transition dynamics. To address this issue, we introduce SSCDM, a semantic symbol correcting diffusion model whose discrete-time transition dynamics are constructed using solutions from continuous-time Markov chain theory. Furthermore, to promote synergy between DMs and JSCC, our DM structure embeds discrete symbols into a latent feature space using a learned VQ codebook, and a self-organizing map-based loss is incorporated during codebook learning to enhance the geometric vicinity between neighboring digital symbols, thereby promoting topology-preserving semantic representations. Experimental results show that the proposed method significantly improves image reconstruction quality and outperforms previous symbol-level denoising techniques under low signal-to-noise ratio scenarios and different datasets.
\end{abstract}

\begin{IEEEkeywords}
continuous-time Markov chain, discrete diffusion model, vector quantization, digital semantic communication, self-organizing map.
\end{IEEEkeywords}

\IEEEpeerreviewmaketitle

\section{Introduction}
Diffusion models (DMs) \cite{ho2020denoising, song2020score, austin2021structured, campbell2022continuous, sun2022score} have been widely adopted across diverse fields such as image and video synthesis, molecular design, and scientific data modeling, opening new avenues for solving complex engineering problems \cite{yang2023diffusion}. Originally developed for generative modeling, DMs can be broadly categorized according to the domain of the data they operate on. Continuous DMs \cite{ho2020denoising, song2020score} operate on continuous-valued data, where the forward process progressively corrupts continuous data by injecting noise, and a reverse denoising process is learned to iteratively recover clean representations from noisy observations. In contrast, discrete DMs \cite{austin2021structured, campbell2022continuous, sun2022score} are designed for data represented by discrete states. In this case, the forward process perturbs the data through state transitions governed by transition probability matrices across successive steps, and the reverse process reconstructs the original data by reversing these state perturbations. Importantly, both continuous and discrete DMs are fundamentally built upon a Markovian forward process, which ensures theoretical tractability and enables computationally efficient training and inference.

Although initially introduced for data generation tasks, DMs provide a unified framework for progressively modeling and reversing data corruption in both continuous and discrete domains. This property naturally aligns with wireless communication systems, where channel impairments can be interpreted as the corruption applied to transmitted signals or symbols. In principle, DMs can therefore be utilized either for continuous signal denoising at the receiver or for discrete symbol correction after detection. However, despite this clear conceptual compatibility, diffusion-based approaches remain largely unexplored in conventional wireless communication systems for the reliable transmission of random bitstreams.

Meanwhile, as sixth-generation (6G) networks shift toward task-oriented paradigms \cite{shi2023task} that demand higher spectral efficiency, semantic communication \cite{luo2022semantic} has emerged as a promising solution. Unlike conventional systems that aim to faithfully transmit raw data, semantic communication focuses on delivering task-relevant information, thereby reducing redundancy. A key enabler of this paradigm is joint source-channel coding (JSCC), which integrates source and channel coding into a unified neural network trained end-to-end for robust task performance \cite{gunduz2024joint, huh2025feature, huh2026federated, huh2026differential}. While early semantic communication schemes often adopt fully analog or continuous latent representations, practical deployments increasingly favor digital semantic communication for better compatibility with existing infrastructures. In this work, we consider a vector quantization (VQ)-based digital semantic communication system for image transmission \cite{huh2025universal, kim2026extended}, where semantic features from a JSCC encoder are quantized via a learned VQ codebook and transmitted as digital symbols. Nevertheless, despite its advantages, JSCC-based semantic communication remains vulnerable in low signal-to-noise ratio (SNR) regimes, where severe channel noise distorts semantic symbols and significantly degrades task performance.

Recently, DMs have been introduced into digital semantic communication to improve task performance under adverse wireless conditions, particularly in low-SNR regimes where semantic distortion is severe \cite{tang2025cd3m, he2025residual, mo2025scdm}. Existing studies can be divided into two distinct directions: symbol error correction using discrete DMs and symbol denoising using continuous DMs. For symbol error correction, the channel denoising discrete diffusion model (CD3M) proposed in \cite{tang2025cd3m}, following the discrete DM framework in \cite{austin2021structured}, estimates per-step transition matrices by identifying the best-fitting multiplier between two consecutive symbol detection matrices. While this approach leverages the true observed detection probabilities during the symbol error correction process, it does not explicitly account for the fact that \textit{the symbol-wise transition dynamics induced by digital communication channels violate the Markov property in the discrete domain}, which constitutes a key finding of this work. Since the outstanding performance of DMs comes from the Markov property, the violation severely limits the attainable performance. Notably, this non-Markovian behavior is an inherent characteristic of digital communication systems and is not specific to semantic communication settings. 

Another line of work on symbol error correction is the discrete channel denoising diffusion model (DCDDM) proposed in \cite{he2025residual}. It adopts a discrete DM framework \cite{austin2021structured} and constructs per-step transition matrices heuristically. Specifically, each transition matrix assigns a self-transition probability corresponding to error-free transmission, while the remaining probability mass is distributed across other symbols in proportion to their pairwise symbol error probabilities. Although this design offers a simple and computationally efficient approximation, the resulting transition matrices do not accurately capture the true stochastic behavior of digital communication channels. In particular, the off-diagonal entries do not correspond to true symbol detection error probabilities, and the independently constructed per-step transition matrices fail to \textit{jointly form a coherent Markov chain}, due to violation of the Chapman–Kolmogorov equation.

Consequently, both DCDDM and CD3M rely on transition matrices that do not satisfy the Markov property assumed in the discrete DM framework \cite{austin2021structured}, creating a fundamental mismatch between the modeled diffusion dynamics and the true channel-induced symbol transitions. The strong empirical performance and computational efficiency of DMs fundamentally arise from the Markovian structure of the forward process, which enables tractable training and inference. Retaining these advantages therefore requires strict adherence to this core theoretical premise. When the adopted transition matrices violate this rigorous foundational assumption and diverge from the actual channel dynamics, the resulting structural inconsistency severely constrains the attainable performance. The non-Markovian symbol transitions in typical digital communication therefore pose a fundamental challenge to applying discrete DMs in digital semantic communication, which serves as the central motivation for the proposed approach.

In contrast, continuous DM-based symbol denoising is exemplified by the score-based channel denoising model (SCDM) proposed in \cite{mo2025scdm}, built on a continuous DM framework \cite{song2020score}. It directly denoises noisy digital symbols, such as QAM symbols corrupted by Gaussian noise, prior to receiver-side symbol detection. However, SCDM, like the discrete DM-based approaches, operates solely on symbol representations, e.g., coordinates or indices, without \textit{explicitly accounting for the semantic relevance among encoded features}. Consequently, the diffusion module primarily serves as a noise suppressor rather than a semantics-aware component. This inherently constrains performance and hinders effective joint design with semantic communication, leaving substantial untapped potential for improvement. These limitations motivate the development of semantics-informed diffusion architectures and training strategies \textit{explicitly designed in coordination with the JSCC encoder–decoder and the learned VQ codebook}, enabling more effective learning, improved task performance, and tighter integration with digital semantic communication.

To address the above challenges, we propose a novel semantic symbol correcting diffusion model (SSCDM) tailored for VQ-based digital semantic communication. Within a discrete DM framework \cite{austin2021structured}, the core design objective of SSCDM is to improve Markov consistency across diffusion steps while preserving the channel-induced symbol transition probabilities. To this end, we leverage the shared-eigenbasis solution structure of continuous-time Markov chain (CTMC) theory \cite{campbell2022continuous} as a theoretical foundation for constructing Markov-consistent transition dynamics. We enforce this particular CTMC solution structure across all discrete-time transition matrices and jointly optimize them by minimizing the discrepancy with the true symbol transition matrices. Consequently, the resulting diffusion dynamics provide an approximate Markov projection of the channel-induced symbol transition process, enabling a more reliable reverse denoising process and significantly improving semantic reconstruction fidelity.

Furthermore, unlike prior works, SSCDM goes beyond being a simple plug-in symbol error corrector in semantic communication systems. It is explicitly designed to synergize with the JSCC framework by accounting for the latent feature space structured by the learned VQ codebook in two key aspects: latent-aware DM input embedding and a Gray-mapped codebook formulation. Specifically, SSCDM does not directly feed symbol index vectors into the DM. Rather, it first maps each index to its corresponding codeword via the learned VQ codebook, thereby forming a codeword tensor representation. The DM then operates on this semantic feature tensor, enabling it to interpret the detected symbols in a manner consistent with how the JSCC model represents semantic information. By this embedding process, SSCDM enables semantically informed and coherent diffusion-based symbol correction while preserving the discrete nature of the transmitted symbols.

As another point of synergy, we regulate the VQ codebook training to preserve the topology of constellation symbols. A VQ codebook trained without explicit inter-codeword regularization does not necessarily preserve geometric continuity or semantic consistency among neighboring symbols, unlike Gray mapping in conventional digital modulation. As a result, imperfect symbol error correction, where a misdetected symbol is moved closer to the true constellation point but does not exactly reach it, does not reliably lead to improved reconstruction performance. Although \cite{zhou2024moc} proposed a heuristic post-training approach that greedily reorders learned codewords based on Euclidean distance, such a method lacks theoretical grounding and can result in suboptimal or unstable performance. To mitigate this issue, we incorporate a self-organizing map (SOM)–based loss \cite{kohonen2002self, fortuin2018som}, which encourages neighboring constellation symbols to be associated with semantically similar codewords. This topology-preserving constraint enforces geometric and semantic consistency within the codebook, thereby further strengthening the coupling between the VQ codebook, the diffusion-based symbol correction, and the JSCC decoder. As a result, SSCDM achieves more stable symbol correction and yields substantial performance gains, particularly in low-SNR regimes.

\textbf{Contributions.}\; This paper contributes as follows:
\begin{itemize}
    \item \textbf{Transition dynamics analysis:} We analyze the inherent challenge of modeling symbol-wise transition dynamics in VQ-based digital semantic communication, highlighting the violation of the Markov property in the discrete symbol domain under digital communication channels.
    
    \item \textbf{Markov-enforced digital symbol correction method:} We propose SSCDM, a novel symbol correction framework that enforces Markov consistency on transition matrices by projecting channel-induced symbol transitions onto a CTMC-inspired transition family, enabling robust symbol correction under channel noise.
    
    \item \textbf{Neural architecture leveraging learned VQ codebook:} We design a neural JSCC architecture that utilizes a learned VQ codebook to embed discrete symbols into a continuous latent space, allowing effective diffusion-based correction with improved semantic consistency.
    
    \item \textbf{Topology-preserving VQ regularization:} To enhance the geometric consistency of the learned VQ codebook and ensure topology-preserving symbol mapping, we integrate a SOM-based regularization loss into the training, which effectively improves semantic continuity and correction robustness.
    
    \item \textbf{Comprehensive experiments:} We conduct extensive experiments across various channel conditions and datasets, demonstrating the superiority of our SSCDM in terms of semantic fidelity and robustness over existing approaches.
\end{itemize}

\textbf{Notations.}\; Vectors and matrices are expressed in lower case and upper case bold, respectively. $\|\cdot\|_p$ denotes the $l_p$ norm of a vector. $\mathbb{R}$ represents the real number set. $[a{:}b]$ is a integer set of $\{a, a + 1, \dots, b\}$. $[\cdot]^\mathsf{T}$ denotes the transpose of a matrix or vector. $[\mathbf{A}]_{i, j}$ and $[\mathbf{a}]_{i}$ denote the $(i, j)$-th entry of $\mathbf{A}$ and the $i$-th entry of $\mathbf{a}$, respectively. $[\mathbf{A}]_{i, :}$ and $[\mathbf{A}]_{:, j}$ denote the $i$-th row and the $j$-th column of $\mathbf{A}$, respectively. $\mathcal{CN}(\boldsymbol{\mu}, \boldsymbol{\Sigma})$ is a complex Gaussian distribution with mean vector $\boldsymbol{\mu}$ and covariance matrix $\boldsymbol{\Sigma}$. $\mathbf{I}_m$ and $\mathbf{1}_m$ represent the $m\times m$ identity matrix and the $m$-dimensional all-ones vector, respectively. $\mathbb{E}[\cdot]$ denotes the expected value of a given random variable.

\section{Preliminaries}\label{sec:preliminaries}

\subsection{Discrete Diffusion Model}\label{subsec:discrete diffusion model}
Analogous to the continuous DM \cite{ho2020denoising}, which removes Gaussian noise through a reverse process learned from a predefined forward noise process, the discrete DM \cite{austin2021structured} is designed to reverse the stochastic transitions applied to perturbed discrete-valued data, as illustrated in Fig. \ref{fig:system model}(c). The Markovian forward process is defined over a finite set of discrete states and must be specified in advance by the transition matrices. In this subsection, we adopt a discrete-time formulation in which transitions occur step-by-step over fixed time steps.

For $M$-categorical scalar random variables\footnote{For clarity, we describe the formulation in the one-dimensional case. Extensions to the multi-dimensional setting can be found in existing works on discrete DMs, such as \cite{austin2021structured, campbell2022continuous}.} $u_{t_k}, u_{t_{k - 1}}\in[1{:}M]$, the forward state transition matrix $\mathbf{Q}_{t_k|t_{k - 1}}\in\mathbb{R}^{M\times M}$ from time point $t_{k - 1}$ to $t_k$ is defined as $[\mathbf{Q}_{t_k|t_{k - 1}}]_{i,j} = q_{t_k|t_{k - 1}}(u_{t_k} = j|u_{t_{k - 1}} = i)$, where $0 \leq t_{k - 1} < t_k \leq 1$ for $k\in[1{:}T]$ and $T$ is a total number of times steps. Here, $u_{t_0}$ denotes the original data sampled from $q_{t_0}(u) = p_\mathsf{data}(u)$, i.e., training dataset, and $u_{t_T}$ represents the fully perturbed data following $q_{t_T}(u)$, which is a known stationary distribution approximated by $p_\mathsf{ref}(u)$ for ease of sampling.

Using the one-hot encoding of $u$, i.e., a row vector $\boldsymbol{w}\in\mathbb{R}^{1\times M}$, the transition probability can be written as $q_{t_k|t_{k - 1}}(u_{t_k}|u_{t_{k - 1}}) = \mathsf{Cat}(\boldsymbol{w}_{t_k};\mathbf{p} = \boldsymbol{w}_{t_{k - 1}}\mathbf{Q}_{t_k|t_{k - 1}})$, where $\mathsf{Cat}(\boldsymbol{w};\mathbf{p})$ denotes the categorical distribution over $\boldsymbol{w}$ with class probabilities given by the row vector $\mathbf{p}\in\mathbb{R}^{1\times M}$. This leads to the $k$-step marginal: $q_{t_k|t_0}(u_{t_k}|u_{t_0}) = \mathsf{Cat}(\boldsymbol{w}_{t_k};\mathbf{p} = \boldsymbol{w}_{t_0}\mathbf{Q}_{t_k|t_0})$, where $\mathbf{Q}_{t_k|t_0} = \prod_{i = 1}^k\mathbf{Q}_{t_i|t_{i - 1}}\in\mathbb{R}^{M\times M}$ is the accumulated transition matrix up to time $t_k$. Then, the full forward Markov process is described by
\begin{align}
    q_{t_0:t_T}(u_{t_0:t_T}) = p_\mathsf{data}(u_{t_0})\prod\nolimits_{k = 1}^{T}q_{t_k|t_{k - 1}}(u_{t_k}|u_{t_{k - 1}}).
\end{align}

On the other hand, the reverse process can be expressed as $q_{t_0:t_T}(u_{t_0:t_T}) = q_{t_T}(u_{t_T})\prod_{k = 1}^{T}q_{t_{k - 1}|t_k}(u_{t_{k - 1}}|u_{t_k})$, where $q_{t_{k - 1}|t_k} = q_{t_k|t_{k - 1}}\frac{q_{t_{k - 1}}}{q_{t_k}}$. However, in practice, this reverse kernel $q_{t_{k - 1}|t_k}$ is generally intractable, since evaluating the marginal distribution $q_{t_k}$ requires integrating over the true data distribution $p_\mathsf{data}$, which is unknown and not available in closed form, i.e., $q_{t_k}(u_{t_k}) = \int q_{t_k|t_0}(u_{t_k}|u_{t_0})p_\mathsf{data}(u_{t_0})du_{t_0}$. Thus, it is approximated by training an NN-based reverse kernel $p_{t_{k - 1}|t_k}^{\boldsymbol{\theta}}$, which is parameterized by $\boldsymbol{\theta}$. A common approach is to define the reverse kernel via the conditional distribution $q_{t_{k - 1}|t_k, t_0}$ as follows:
\begin{align}\label{eq:learned reverse kernel}
\begin{split}
    &p_{t_{k - 1}|t_k}^{\boldsymbol{\theta}}(u_{t_{k - 1}}|u_{t_k})\\
    &\quad= \sum\nolimits_{u_{t_0} = 1}^M q_{t_{k - 1}|t_k, t_0}(u_{t_{k - 1}}|u_{t_k}, u_{t_0})p_{t_0|t_k}^{\boldsymbol{\theta}}(u_{t_0}|u_{t_k}),
\end{split}
\end{align}
where $q_{t_{k - 1}|t_k, t_0} = q_{t_k|t_{k - 1}}\frac{q_{t_{k - 1}|t_0}}{q_{t_k|t_0}}$. The full reverse process is then approximated as
\begin{align}
    p_{t_0:t_T}^{\boldsymbol{\theta}}(u_{t_0:t_T}) = p_\mathsf{ref}(u_{t_T})\prod\nolimits_{k = 1}^{T}p_{t_{k - 1}|t_k}^{\boldsymbol{\theta}}(u_{t_{k - 1}}|u_{t_k}).
\end{align}

The training loss function $\mathcal{L}_\mathsf{DT}$ \cite{austin2021structured}, primarily based on the formulation in \cite{ho2020denoising}, is defined as
\begin{align}\label{eq:KL loss}
    &\nonumber\mathcal{L}_\mathsf{DT} = \mathbb{E}_{p_\mathsf{data}}\big[-\mathbb{E}_{q_{t_1|t_0}}\log{p_{t_0|t_1}^{\boldsymbol{\theta}}(u_{t_0}|u_{t_1})} + \sum\nolimits_{k = 2}^T\mathbb{E}_{q_{t_k|t_0}}\\
    &~~[\mathsf{KL}(q_{t_{k - 1}|t_k, t_0}(u_{t_{k - 1}}|u_{t_k}, u_{t_0})||p_{t_{k - 1}|t_k}^{\boldsymbol{\theta}}(u_{t_{k - 1}}|u_{t_k}))]\big],
\end{align}
which serves as an upper bound on the negative log-likelihood $\mathbb{E}_{p_\mathsf{data}}[-\log{q_{t_0}(u_{t_0})}]$. Here, the first term evaluates how accurately the model reconstructs the original data from the last single reverse step. Meanwhile, the KL term guides the model to approximate the true posterior $q_{t_{k - 1}|t_k, t_0}$ across all steps, providing a ground-truth trajectory for the reverse transition since the true reverse kernel $q_{t_{k - 1}|t_k}$ is intractable without conditioning on the original data $u_{t_0}$. To further enhance performance, a direct generative loss term is introduced:
\begin{align}\label{eq:Generative loss}
    \mathcal{L}_\mathsf{G} = \mathbb{E}_{p_\mathsf{data}}\big[\mathbb{E}_{q_{t_k|t_0}}[-\log{p_{t_0|t_k}^{\boldsymbol{\theta}}(u_{t_0}|u_{t_k})}]\big],
\end{align}
which encourages the model to directly estimate the original data from noisy states sampled at arbitrary time steps, complementing the KL term in \eqref{eq:KL loss} by explicitly considering the learned reverse kernel in \eqref{eq:learned reverse kernel}. The overall objective is given by
\begin{align}\label{eq:D3PM loss}
    \mathcal{L}_\lambda = \mathcal{L}_\mathsf{DT} + \lambda\mathcal{L}_\mathsf{G},
\end{align}
where $\lambda$ is a hyperparameter that balances the two terms.

\subsection{Continuous-Time Markov Chain}\label{subsec:continuous-time markov chain}
Under the concept of CTMC \cite{campbell2022continuous, sun2022score}, state transitions can occur at any time $t\in[0, 1]$, unlike the discrete-time Markov process in the previous subsection where transitions happen only at predefined time points $\{t_k\}_{k = 0}^T$. Given an initial distribution $q_0 = p_\mathsf{data}$, the CTMC forward process is characterized by a time-dependent rate matrix $\overrightarrow{\mathbf{R}}_t\in\mathbb{R}^{M\times M}$, which defines the instantaneous transition rates between states: $[\overrightarrow{\mathbf{R}_t}]_{\tilde{u}, u} = \overrightarrow{r_t}(u|\tilde{u}) = \lim_{\Delta t\rightarrow0}\frac{q_{t|t - \Delta t}(u|\tilde{u}) - \delta_{u, \tilde{u}}}{\Delta t}$, where $\delta_{u, \tilde{u}}$ is the Kronecker delta, equal to 1 if $u = \tilde{u}$ and 0 otherwise. Intuitively, at rate $[\overrightarrow{\mathbf{R}_t}]_{\tilde{u}, u}$, the categorical probability flows from state $\tilde{u}$ to $u$, or equivalently, the state $\tilde{u}$ is instantaneously transitions to $u$. Accordingly, $\delta_{u, \tilde{u}}$ ensures that the probability of remaining in the same state is properly accounted for, so that the sum of each row of the rate matrix is zero. Equivalently, the infinitesimal transition probability can be expressed as $q_{t|t - \Delta t}(u|\tilde{u}) = \delta_{u, \tilde{u}} + [\overrightarrow{\mathbf{R}_t}]_{\tilde{u}, u}\Delta t + o(\Delta t)$. Thus, the rate matrix satisfies the following properties.
\begin{prop}[Rate matrix properties]\label{prop:rate matrix properties}
For the CTMC forward rate matrix, it holds that
\begin{enumerate}
    \item $\overrightarrow{r_t}(u|\tilde{u}) \geq 0$ for $u \neq \tilde{u}$,
    \item $\overrightarrow{r_t}(u|u) \leq 0$,
    \item $\overrightarrow{r_t}(u|u) = -\sum\nolimits_{u \neq \tilde{u}}\overrightarrow{r_t}(u|\tilde{u})$.
\end{enumerate}
\end{prop}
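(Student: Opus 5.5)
The plan is to derive all three properties directly from the limit definition of the rate matrix, $[\overrightarrow{\mathbf{R}_t}]_{\tilde{u}, u} = \lim_{\Delta t\rightarrow0}\frac{q_{t|t - \Delta t}(u|\tilde{u}) - \delta_{u, \tilde{u}}}{\Delta t}$. We use only two facts about the transition kernel: $q_{t|t-\Delta t}(\cdot|\tilde{u})$ is a probability distribution on $[1{:}M]$, so its entries lie in $[0,1]$ and sum to one, and the limit defining the rate exists.

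\emph{Property 1.} For $u\neq\tilde{u}$ the Kronecker delta vanishes. The difference quotient then becomes $q_{t|t-\Delta t}(u|\tilde{u})/\Delta t$, which is nonnegative for every $\Delta t>0$. A limit of nonnegative quantities is nonnegative, so $\overrightarrow{r_t}(u|\tilde{u})\geq 0$.

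\emph{Property 2.} On the diagonal the quotient is $(q_{t|t-\Delta t}(u|u)-1)/\Delta t$. Its numerator is $\le 0$ because a probability cannot exceed one, and its denominator is positive. The limit is therefore nonpositive.

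\emph{Property 3.} We read the intended identity as the zero-row-sum condition $\overrightarrow{r_t}(\tilde{u}|\tilde{u}) = -\sum_{u\neq\tilde{u}}\overrightarrow{r_t}(u|\tilde{u})$. The index placement in the displayed statement is slightly garbled, but the surrounding text says explicitly that each row sums to zero. To prove it, sum the difference quotients over $u\in[1{:}M]$ for fixed $\tilde{u}$:
\begin{align}
\sum\nolimits_{u}\frac{q_{t|t-\Delta t}(u|\tilde{u})-\delta_{u,\tilde{u}}}{\Delta t} = \frac{1-1}{\Delta t}=0 .
\end{align}
Since the state space is finite, the limit and the sum can be interchanged. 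This gives $\sum_u \overrightarrow{r_t}(u|\tilde{u})=0$, and isolating the diagonal term yields the claim. The same identity also follows from the first-order expansion $q_{t|t-\Delta t}(u|\tilde{u})=\delta_{u,\tilde{u}}+[\overrightarrow{\mathbf{R}_t}]_{\tilde{u},u}\Delta t+o(\Delta t)$: summing over $u$ and matching the coefficients of $\Delta t$ gives the same result.

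The only point needing care is the limit--sum interchange together with the existence of each limit. Both are harmless here: finiteness of $M$ lets linearity of limits apply directly, and existence of the limits is built into the definition of the CTMC rate matrix. Beyond that, the argument is routine bookkeeping.
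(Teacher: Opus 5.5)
Your argument is correct and follows the paper's own (implicit) reasoning: the paper gives no separate proof and reads the three properties straight off the limit definition and the expansion $q_{t|t-\Delta t}(u|\tilde{u}) = \delta_{u,\tilde{u}} + [\overrightarrow{\mathbf{R}_t}]_{\tilde{u},u}\Delta t + o(\Delta t)$, with the zero row-sum coming from normalization of the kernel, exactly as you do. Your reading of item 3 as $\overrightarrow{r_t}(\tilde{u}|\tilde{u}) = -\sum_{u\neq\tilde{u}}\overrightarrow{r_t}(u|\tilde{u})$ is the intended one, since the paper states this row-sum form in the surrounding text and uses it later when showing $\frac{1}{\rho}\overrightarrow{\mathbf{R}_t}+\mathbf{I}_M$ is row-stochastic.
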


Meanwhile, for $t > s$, the transition matrix $\mathbf{Q}_{t|s}$ and the rate matrix $\overrightarrow{\mathbf{R}_t}$ are related by the Kolmogorov forward equation: $\partial_t q_{t|s}(u|\tilde{u}) = \sum\nolimits_{u'}q_{t|s}(u'|\tilde{u})\overrightarrow{r_t}(u|u')$, which can be written in matrix form as $\partial_t\mathbf{Q}_{t|s} = \mathbf{Q}_{t|s}\overrightarrow{\mathbf{R}_t}$. For a particular class of CTMCs, where the rate matrices $\overrightarrow{\mathbf{R}_t}$ and $\overrightarrow{\mathbf{R}_s}$ commute for all $t$ and $s$, the solution can be expressed via eigen-decomposition and the matrix exponential $\mathsf{exp}(\cdot)$:
\begin{align}
    \mathbf{Q}_{t|0} &= \mathbf{V}\mathbf{D}(t)\mathbf{V}^{-1},\label{eq:ctmc transition matrix}\\
    \overrightarrow{\mathbf{R}_t} &= \mathbf{V}\mathbf{\Sigma}(t)\mathbf{V}^{-1},\label{eq:ctmc rate matrix}
\end{align}
where $\mathbf{D}(t) = \mathsf{exp}(\int_0^t\mathbf{\Sigma}(t')dt')\in\mathbb{R}^{M\times M}$ and $\mathbf{\Sigma}(t)\in\mathbb{R}^{M\times M}$ are diagonal matrices, and $\mathbf{V}\in\mathbb{R}^{M\times M}$ is the eigenvector matrix. Therefore, this shared-eigenbasis representation corresponds to a restricted CTMC solution structure rather than a general CTMC solution. Consequently, the sub-transition matrix from time $s$ to $t$ can be expressed as
\begin{align}\label{eq:ctmc sub-transition matrix}
    \mathbf{Q}_{t|s} = \mathbf{V}\mathsf{exp}\left(\int_s^t\mathbf{\Sigma}(t')dt'\right)\mathbf{V}^{-1}.
\end{align}

\section{System Model}\label{sec:system model}

\begin{figure}[!t]
    \centering
    \includegraphics[width=1\linewidth]{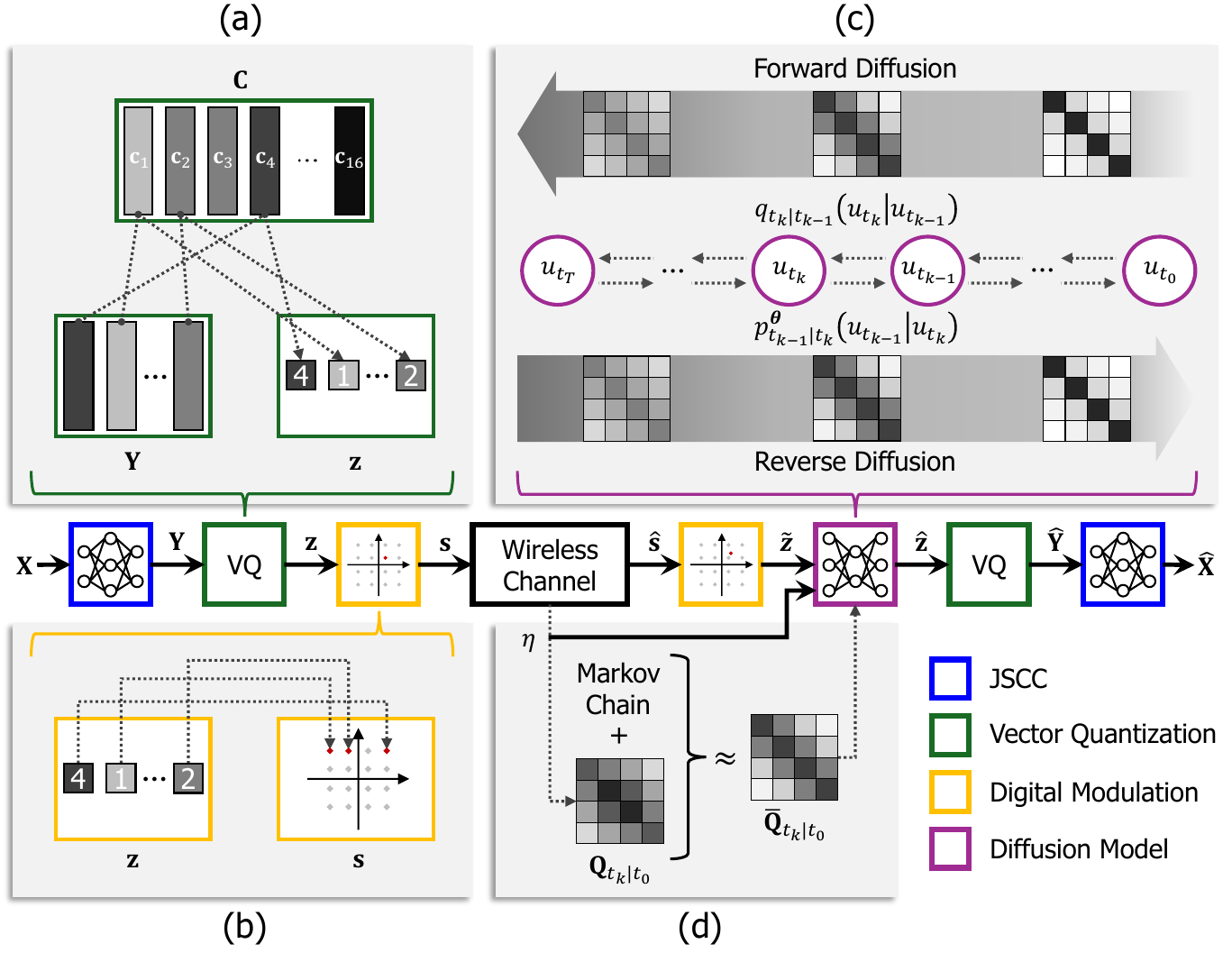}
    \caption{An illustration of (a) vector quantization ($M = 16$), (b) constellation symbol mapping ($M = 16$), (c) forward and reverse diffusion processes ($M = 4$), and (d) the design philosophy of the proposed discrete diffusion model ($M = 4$). Note that the value of $M$ differs in (a)–(d) for visualization purposes only, while it should remain consistent throughout the actual system.}
    \label{fig:system model}
\end{figure}

Consider a point-to-point digital semantic communication system for image transmission, which especially incorporates a discrete DM at the receiver to enhance robustness against detection errors on digital symbols caused by channel noise. The system employs an NN-based JSCC encoder and decoder, a VQ codebook, and a discrete DM. The overall communication procedure is as follows and summarized by Fig. \ref{fig:system model}.

At the transmitter side, the source image $\mathbf{X} \in \mathbb{R}^{C \times H \times W}$, where $C$, $H$, and $W$ represent the number of tensor channels, height, and width of the image, respectively, is passed through an encoder $f_{\boldsymbol{\psi}_\mathsf{Tx}}$ parameterized by $\boldsymbol{\psi}_\mathsf{Tx}$. This encoder converts the image into a collection of feature vectors $\mathbf{Y} = [\mathbf{y}_1, \mathbf{y}_2, \dots, \mathbf{y}_N]^\mathsf{T}\in\mathbb{R}^{N\times d}$, where $N$ is the number of feature vectors and $d$ denotes their dimensionality. Formally, this encoding step is given by $\mathbf{Y} = f_{\boldsymbol{\psi}_\mathsf{Tx}}(\mathbf{X})$.

Next, these feature vectors $\mathbf{Y}$ are quantized to produce a codeword index vector $\mathbf{z} = [z_1, z_2, \dots, z_N]^\mathsf{T}\in[1{:}M]^N$, with $M$ being the size of the codebook or digital modulation order. As shown in Fig. \ref{fig:system model}(a), the quantization uses a learned codebook $\mathbf{C} = [\mathbf{c}_1, \mathbf{c}_2, \dots, \mathbf{c}_M]^\mathsf{T}\in\mathbb{R}^{M\times d}$, selecting the nearest codeword to each feature vector $\{\mathbf{y}_i\}_{i = 1}^N$ based on Euclidean distance, i.e., $z_i = \argmin_{j} \|\mathbf{y}_i - \mathbf{c}_j\|_2^2$ for each $i\in[1{:}N]$. Each codeword $\mathbf{c}_j$ corresponds to a symbol $\mathbf{s}_j$ in the constellation set $\mathbb{S} = \{\mathbf{s}_1, \mathbf{s}_2, \dots, \mathbf{s}_M\}\in\mathbb{R}^{2\times M}$, with the symbol power normalized such that $\mathbb{E}[\|\mathbf{s}_i\|_2^2] = P$. As depicted in Fig. \ref{fig:system model}(b), the index vector $\mathbf{z}$ is then modulated into a transmitted symbol sequence $\mathbf{s} = [\mathbf{s}_{z_1}, \mathbf{s}_{z_2}, \dots, \mathbf{s}_{z_N}]\in\mathbb{S}^N$, which is sent over an additive white Gaussian noise (AWGN) channel. To summarize, the data transformation at the transmitter follows the path:
\begin{align}
    \text{(Tx)}\quad\mathbf{X} \xrightarrow{f_{\boldsymbol{\psi}_\mathsf{Tx}}} \mathbf{Y} \xrightarrow{\mathbf{C}} \mathbf{z} \xrightarrow{\mathbb{S}} \mathbf{s}.
\end{align}

At the receiver side, the transmitted symbol sequence $\mathbf{s}$ is perturbed by AWGN, i.e., $\mathbf{n} \sim \mathcal{CN}(\mathbf{0}_N, \sigma^2\mathbf{I}_N)$, yielding the received signal as $\hat{\mathbf{s}} = \mathbf{s} + \mathbf{n}$. Based on this noisy observation, a detection process estimates the transmitted symbols, producing $\bar{\mathbf{s}} = [\mathbf{s}_{\tilde{z}_1}, \mathbf{s}_{\tilde{z}_2}, \dots, \mathbf{s}_{\tilde{z}_N}]\in\mathbb{S}^N$, where $\tilde{z}_i\in[1{:}M]$ denotes the detected index for the $i$-th sybmol. The SNR for this channel is defined as $\eta = \frac{P}{\sigma^2}$, where $P$ is the average symbol power.

To compensate for the symbol detection errors caused by noise, a discrete DM, $\mathsf{DM}_{\boldsymbol{\theta}}$ parameterized by $\boldsymbol{\theta}$, is employed to refine the perturbed index sequence $\tilde{\mathbf{z}} = [\tilde{z}_1, \tilde{z}_2, \dots, \tilde{z}_N]^\mathsf{T}\in[1{:}M]^N$. Each index $\tilde{z}_i$ corresponds to the $M$-categorical random variable $u_{t_k}$ in Fig. \ref{fig:system model}(c), representing a noise-perturbed state. The model progressively corrects the corrupted state and produces a restored index vector $\hat{\mathbf{z}} = [\hat{z}_1, \hat{z}_2, \dots, \hat{z}_N]^\mathsf{T}\in[1{:}M]^N$. This reverse diffusion process is expressed as $\hat{\mathbf{z}} = \mathsf{DM}_{\boldsymbol{\theta}}(\tilde{\mathbf{z}}, \eta, \mathbf{C})$, where the SNR value $\eta$ determines the number of denoising steps, i.e., the starting time point $t_k$, according to the severity of index perturbation. Notably, the proposed diffusion model performs semantic-aware processing by leveraging the learned codebook $\mathbf{C}$, which is detailed in Section \ref{subsec:semantics-informed diffusion model architecture}. Consequently, $\mathsf{DM}_{\boldsymbol{\theta}}$ interprets the detected symbol indices $\tilde{\mathbf{z}}$ in a manner consistent with the semantic feature representation of the JSCC framework. Each restored index $\hat{z}_i$ corresponds to $u_{t_0}$ in Fig. \ref{fig:system model}(c), denoting the clean target state in the reverse diffusion trajectory.

From the restored indices $\hat{\mathbf{z}}$, the receiver reconstructs the quantized feature vectors $\hat{\mathbf{Y}} = [\mathbf{c}_{\hat{z}_1}, \mathbf{c}_{\hat{z}_2}, \dots, \mathbf{c}_{\hat{z}_N}]^\mathsf{T}\in\mathbb{R}^{N\times d}$ by referencing the same codebook $\mathbf{C}$ used at the transmitter. Finally, a decoder $g_{\boldsymbol{\psi}_\mathsf{Rx}}$, parameterized by $\boldsymbol{\psi}_\mathsf{Rx}$, transforms the recovered features into the reconstructed image $\hat{\mathbf{X}}\in\mathbb{R}^{C\times H\times W}$. This decoding step can be expressed as $\hat{\mathbf{X}} = g_{\boldsymbol{\psi}_\mathsf{Rx}}(\hat{\mathbf{Y}})$. The data flow at the receiver can be outlined as follows:
\begin{align}
    \text{(Rx)}\quad\hat{\mathbf{s}} \xrightarrow{\textsf{Detection}} \bar{\mathbf{s}} \xrightarrow{\mathbb{S}} \tilde{\mathbf{z}} \xrightarrow{\mathsf{DM}_{\boldsymbol{\theta}}} \hat{\mathbf{z}} \xrightarrow{\mathbf{C}} \hat{\mathbf{Y}} \xrightarrow{g_{\boldsymbol{\psi}_\mathsf{Rx}}} \hat{\mathbf{X}}.
\end{align}

\section{Semantic Symbol Error Correcting\\Discrete Diffusion Model}\label{sec:semantic symbol error correcting discrete diffusion model}
In this section, we propose a discrete DM to correct symbol errors in VQ-based digital semantic communication systems. We first identify a fundamental challenge in designing state transition matrices that accurately reflect channel-induced symbol corruption. To address this challenge, we propose a novel solution termed SSCDM, which leverages CTMC theory to construct Markov-consistent transition matrices for a discrete-time diffusion process. These transition matrices are obtained prior to DM training and kept fixed during DM training and sampling, with the detailed procedures provided in Section \ref{sec:training and sampling algorithms}. Finally, we present the semantically informed NN architecture used to implement SSCDM.

\subsection{Discussions on Transition Matrix Design}\label{subsec:discussions on transition matrix design}
In a discrete diffusion process for VQ-based semantic communication, the forward transition matrices $\{\mathbf{Q}_{t_k|t_0}\}_{k = 1}^T$ should reflect the true state transition probabilities induced when constellation points in $\mathbb{S}$ are perturbed by Gaussian noise and then detected as discrete symbols. Based on this principle, the following conditions must hold.
\begin{prop}[Transition matrix condition]\label{prop:transition matrix condition}
Let $\sigma_{t_k}^2$ denote the complex Gaussian noise variance added from time point $t_{k - 1}$ to $t_k$ for $k\in[1{:}T]$ with $\sigma_{t_0}^2 = 0$. Given this predefined noise variance sequence $\{\sigma_{t_k}^2\}_{k = 1}^T$, the transition matrices for symbols corrupted by Gaussian noise must satisfy
\begin{align*}
    [\mathbf{Q}_{t_k|t_0}]_{i, j} = \iint_{\mathbb{A}_j}\frac{1}{\pi\bar{\sigma}_{t_k}^2}e^{-\frac{||\mathbf{x} - \mathbf{s}_i||_2^2}{\bar{\sigma}_{t_k}^2}}d\mathbf{x},
\end{align*}
representing the standard symbol detection probability that a noisy received point generated from $\mathbf{s}_i$ is detected as the $j$-th symbol $\mathbf{s}_j$ \cite{proakis2008digital}, where $\mathbb{A}_j$ is the decision region for $\mathbf{s}_j$ and $\bar{\sigma}_{t_k}^2$ is the cumulative noise variance given by $\sum_{\ell = 1}^k\sigma_{t_\ell}^2$. For $k \in [2{:}T]$, the following must also hold:
\begin{align*}
    [\mathbf{Q}_{t_k|t_{k - 1}}]_{i, j} = \iint_{\mathbb{A}_i}\frac{a_{t_{k - 1}}(\mathbf{x})}{b_{t_{k - 1}}(i)}\iint_{\mathbb{A}_j}\frac{1}{\pi\sigma_{t_k}^2}e^{-\frac{||\mathbf{x}' - \mathbf{x}||_2^2}{\sigma_{t_k}^2}}d\mathbf{x}'d\mathbf{x},
\end{align*}
corresponding to the transition probability from a region $\mathbb{A}_i$ to a region $\mathbb{A}_j$, where $a_{t_k}(\mathbf{x})$ denotes the marginal distribution over the constellation at time $t_k$, expressed as $a_{t_k}(\mathbf{x}) = \sum_{u = 1}^M\frac{q_{t_0}(u)}{\pi\bar{\sigma}_{t_k}^2}e^{-||\mathbf{x} - \mathbf{s}_u||_2^2/\bar{\sigma}_{t_k}^2}$, and $b_{t_k}(i)$ is a normalization factor for $a_{t_k}(\mathbf{x})$ in a region $\mathbb{A}_i$, such as $b_{t_k}(i) = \iint_{\mathbb{A}_i}a_{t_k}(\mathbf{x})d\mathbf{x}$.
\end{prop}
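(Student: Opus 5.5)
The plan is to derive both formulas directly from the physical model behind the forward process. Symbol $\mathbf{s}_i$ (state $u_{t_0}=i$) passes through a cascade of independent complex Gaussian perturbations with variances $\sigma_{t_1}^2,\dots,\sigma_{t_k}^2$. The discrete state at time $t_k$ is then the index of the decision region $\mathbb{A}_j$ containing the accumulated received point. The two parts of the statement, the cumulative matrix $\mathbf{Q}_{t_k|t_0}$ and the one-step matrix $\mathbf{Q}_{t_k|t_{k-1}}$, are handled separately.

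\textbf{Cumulative matrix.} Write the received point at time $t_k$ as $\mathbf{x}_{t_k}=\mathbf{s}_i+\sum_{\ell=1}^k\mathbf{n}_\ell$ with independent $\mathbf{n}_\ell\sim\mathcal{CN}(\mathbf{0},\sigma_{t_\ell}^2)$. Since a sum of independent circular Gaussians is Gaussian with additive variances, $\mathbf{x}_{t_k}\sim\mathcal{CN}(\mathbf{s}_i,\bar{\sigma}_{t_k}^2)$. Identifying $\mathbb{C}$ with $\mathbb{R}^2$, its density is $\frac{1}{\pi\bar{\sigma}_{t_k}^2}e^{-\|\mathbf{x}-\mathbf{s}_i\|_2^2/\bar{\sigma}_{t_k}^2}$. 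The event $u_{t_k}=j$ is exactly $\mathbf{x}_{t_k}\in\mathbb{A}_j$, so integrating this density over $\mathbb{A}_j$ gives the first formula, which is the standard symbol detection probability \cite{proakis2008digital}. Because the regions $\{\mathbb{A}_j\}$ partition the plane up to measure-zero boundaries, the rows sum to one.

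\textbf{One-step matrix.} For $k\ge2$, condition only on the discrete state $u_{t_{k-1}}=i$, not on the continuous point. First, by Bayes' rule with prior $q_{t_0}$, the continuous point $\mathbf{x}_{t_{k-1}}$ has the Gaussian-mixture marginal $a_{t_{k-1}}(\mathbf{x})$. Conditioning on $\mathbf{x}\in\mathbb{A}_i$ truncates and renormalizes this to $a_{t_{k-1}}(\mathbf{x})\mathbbm{1}_{\mathbb{A}_i}(\mathbf{x})/b_{t_{k-1}}(i)$. Second, given $\mathbf{x}_{t_{k-1}}=\mathbf{x}$, the fresh noise $\mathbf{n}_k$ is independent of the past, so $\mathbf{x}_{t_k}\,|\,\mathbf{x}\sim\mathcal{CN}(\mathbf{x},\sigma_{t_k}^2)$. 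The probability of landing in $\mathbb{A}_j$ is then the inner integral. Third, the law of total probability over $\mathbf{x}$ yields the stated double integral.

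The main subtlety is justifying that $q(u_{t_k}=j\mid u_{t_{k-1}}=i)$ is indeed given by this formula. One must condition on the region alone, marginalizing over the unknown prior $q_{t_0}$, and not on the earlier discrete history; this is exactly why the resulting one-step matrices need not compose via Chapman–Kolmogorov to give the cumulative ones. I would state explicitly that the one-step kernel is defined as the region-to-region transition probability under the stationary marginal. I would also check consistency: summing the first part against $q_{t_0}$ recovers $b_{t_k}(j)$, and the second formula preserves these marginals, $\sum_i b_{t_{k-1}}(i)[\mathbf{Q}_{t_k|t_{k-1}}]_{i,j}=b_{t_k}(j)$. Measurability of the boundaries and Fubini for swapping the integrals are routine.
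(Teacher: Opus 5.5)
Your derivation is correct and matches the paper's own justification: the paper gives no separate proof of this proposition. It reads the first formula as the standard detection probability under the cumulative variance $\bar{\sigma}_{t_k}^2$, the same additive-variance, sequential-corruption decomposition used in the proof of Theorem~\ref{thm:Markov is impossible}, and the second as the region-to-region probability with the intermediate point drawn from $a_{t_{k-1}}$ truncated to $\mathbb{A}_i$ and renormalized by $b_{t_{k-1}}(i)$, exactly as you do. Two minor wording fixes: the mixture $a_{t_{k-1}}$ comes from the law of total probability rather than Bayes' rule, and it is the time-$t_{k-1}$ marginal, not a stationary one.
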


However, for standard decision regions of typical digital modulation schemes (e.g., QAM, PAM, and PSK) under nonzero Gaussian noise, the transition matrices satisfying the above conditions exhibit non-Markovian behavior, as stated in the following theorem.

\begin{thm}\label{thm:Markov is impossible}
For typical digital modulation schemes under nonzero Gaussian noise, the transition matrices satisfying the conditions in \emph{\textbf{Proposition \ref{prop:transition matrix condition}}} do not form a Markov chain.
\begin{IEEEproof}
    See \emph{Appendix \ref{apdx:Markov is impossible}}.
\end{IEEEproof}
\end{thm}

\begin{remark}\label{rm:Markov is impossible}
Following \emph{\textbf{Proposition \ref{prop:transition matrix condition}}} with a uniformly distributed prior $q_{t_0}$, the violation of Markov property can also be numerically examined using the Frobenius norm error, i.e., $e_{t_{k_2}|t_{k_1}} = \|\mathbf{Q}_{t_{k_2}|t_0} - \mathbf{Q}_{t_{k_1}|t_0}\mathbf{Q}_{t_{k_2}|t_{k_1}}\|_\mathsf{F}$ for $t_0 < t_{k_1} < t_{k_2}$. While $\mathbf{Q}_{t_{k_1}|t_0}$ and $\mathbf{Q}_{t_{k_2}|t_0}$ are obtained analytically, $\mathbf{Q}_{t_{k_2}|t_{k_1}}$ is estimated via Monte Carlo (MC) simulation with $10^6$ samples per 16-QAM symbol. For the time pairs $(t_{k_1}, t_{k_2}) = (t_{9}, t_{20})$ and $(t_{40}, t_{65})$ under the experimental setup in \emph{Section \ref{sec:experimental results}}, the errors are $0.2412$ and $0.2603$, respectively. These values are considerably larger than the fluctuations from the MC estimation, i.e., $0.0008$ and $0.0002$, computed as the difference between two independent errors such as $|e_{t_{k_2}|t_{k_1}}^{(1)} - e_{t_{k_2}|t_{k_1}}^{(2)}|$.
\end{remark}

This result reveals a fundamental limitation in applying discrete DMs to symbol-level correction. The transition matrices induced by Gaussian perturbation and subsequent symbol detection do not satisfy the Markov consistency condition required by most discrete DMs \cite{austin2021structured, campbell2022continuous}. This discrepancy between the digital communication channel-induced transition dynamics and the Markov structure assumed by existing discrete DMs can lead to mismatches during training or inference, degrading the performance of symbol error correction in digital semantic communication systems.

\subsection{Markov-Enforced Transition Matrices}\label{subsec:markov-enforced transition matrices}
To address the non-Markovian issue identified in \textbf{Theorem \ref{thm:Markov is impossible}}, we propose SSCDM, which enforces Markov consistency on the channel-induced symbol transition matrices $\{\mathbf{Q}_{t_k|t_0}\}_{k = 1}^{T}$ in \textbf{Proposition \ref{prop:transition matrix condition}} to a Markov-consistent transition family under the predefined noise variance sequence\footnote{We adopt a sigmoid-based scheduling function \cite{jabri2022scalable}, as its smooth and non-linear noise allocation enables stable reverse diffusion under varying channel SNR conditions. Details are provided in Appendix \ref{apdx:noise scheduling method}.} $\{\sigma_{t_k}^2\}_{k = 1}^T$, as conceptualized in Fig. \ref{fig:system model}(d). While the proposed SSCDM operates within the discrete-time diffusion framework, we draw inspiration from the CTMC solution in \cite{campbell2022continuous}, reviewed in Section \ref{subsec:continuous-time markov chain}, to construct a set of Markov-consistent transition matrices. Specifically, leveraging the shared-eigenbasis structure, we formulate a CTMC-constrained fitting problem within this restricted class for the time-sampled transition matrices $\{\mathbf{\bar{Q}}_{t_k|t_0}\}_{k = 1}^{T}$ as follows.
\begin{subequations}
\begin{align}
    (\textsf{P0})\quad\quad\quad\argmin_{\mathbf{V}, \mathbf{\Sigma}(t)}&\sum_{k = 1}^T||\mathbf{Q}_{t_k|t_0} - \mathbf{\bar{Q}}_{t_k|t_0}||_\mathsf{F}^2,\label{eq:p0}\\
    \text{s.t.}\quad\quad\mathbf{\bar{Q}}_{t_k|t_0} &= \mathbf{V}\mathbf{D}(t_k)\mathbf{V}^{-1}~\text{for}~k\in[1{:}T],\label{eq:p0c1}\\
    \nonumber\mathbf{\bar{Q}}_{t_k|t_{k - 1}} &= \mathbf{V}\mathbf{D}(t_{k - 1})^{-1}\mathbf{D}(t_k)\mathbf{V}^{-1}\\
    &\quad\quad\quad\quad\quad\quad\quad\text{for}~k\in[1{:}T],\label{eq:p0c2}\\
    [\mathbf{\bar{Q}}_{t_k|t_0}]_{i, j} &\geq 0~\text{for}~i,\! j\!\in[1{:}M], k\in[1{:}T],\label{eq:p0c3}\\
    [\mathbf{\bar{Q}}_{t_k|t_{k - 1}}]_{i, j} &\geq 0~\text{for}~i,\! j\!\in[1{:}M], k\in[1{:}T],\label{eq:p0c4}\\
    \mathbf{D}(t_k) &= \mathsf{exp}\left(\int_0^{t_k}\mathbf{\Sigma}(t')dt'\right)~\text{for}~k\in[1{:}T],\label{eq:p0c5}\\
    \overrightarrow{\mathbf{R}_t} &= \mathbf{V}\mathbf{\Sigma}(t)\mathbf{V}^{-1}~\text{for}~0\leq t\leq 1,\label{eq:p0c6}\\
    [\overrightarrow{\mathbf{R}_t}]_{i, i} &\leq 0~\text{for}~i\in[1{:}M], 0\leq t\leq 1,\label{eq:p0c7}\\
    [\overrightarrow{\mathbf{R}_t}]_{i, j\neq i} &\geq 0~\text{for}~i,\! j\!\in[1{:}M], 0\leq t\leq 1,\label{eq:p0c8}\\
    [\mathbf{V}]_{:, 1} &= \mathbf{1}_M,\label{eq:p0c9}\\
    [\mathbf{\Sigma}(t)]_{1, 1} &= 0~\text{for}~0\leq t\leq 1.\label{eq:p0c10}
\end{align}
\end{subequations}
\eqref{eq:p0c1} and \eqref{eq:p0c2} impose the CTMC solution structure on the cumulative and one-step transition matrices, respectively, based on the eigen-decomposition in \eqref{eq:ctmc transition matrix} and \eqref{eq:ctmc sub-transition matrix}. \eqref{eq:p0c3} and \eqref{eq:p0c4} enforce the non-negativity of both the cumulative and one-step transition matrices. \eqref{eq:p0c5} and \eqref{eq:p0c6} specify the relationship between the transition matrix and the rate matrix in \eqref{eq:ctmc rate matrix} and \eqref{eq:ctmc sub-transition matrix}. \eqref{eq:p0c7}-\eqref{eq:p0c10} are imposed so that the rate matrix $\overrightarrow{\mathbf{R}_t}$ satisfies the properties in \textbf{Proposition \ref{prop:rate matrix properties}}. Especially, \eqref{eq:p0c9} and \eqref{eq:p0c10} enable the row-sums of $\overrightarrow{\mathbf{R}_t}$ and $\mathbf{\bar{Q}}_{t_k|t_0}$ to be $0$ and $1$, respectively: 1) $\sum_{i = 1}^M[\overrightarrow{\mathbf{R}_t}]_{j, i} = [\overrightarrow{\mathbf{R}_t}]_{j, :}\mathbf{1}_M = [\overrightarrow{\mathbf{R}_t}]_{j, :}[\mathbf{V}]_{:, 1} = [\mathbf{V}]_{j, 1}[\mathbf{\Sigma}(t)]_{1, 1} = 0$ for all $j\in[1{:}M]$, where the second equality follows from \eqref{eq:p0c9}; the third equality follows from \eqref{eq:p0c6}; the last equality follows from \eqref{eq:p0c10}; 2) $\sum_{i = 1}^M[\mathbf{\bar{Q}}_{t_k|t_0}]_{j, i} = [\mathbf{\bar{Q}}_{t_k|t_0}]_{j, :}\mathbf{1}_M = [\mathbf{\bar{Q}}_{t_k|t_0}]_{j, :}[\mathbf{V}]_{:, 1} = [\mathbf{V}]_{j, 1}[\mathbf{D}(t_k)]_{1, 1} = [\mathbf{V}]_{j, 1}\cdot 1 = 1$ for all $j\in[1{:}M]$, where the third equality follows from \eqref{eq:p0c1}; the fourth equality follows from \eqref{eq:p0c5} and \eqref{eq:p0c10}; the last equality follows from \eqref{eq:p0c9}. These imply $\overrightarrow{\mathbf{R}_t}\mathbf{1}_M = 0\cdot\mathbf{1}_M$ and $\mathbf{\bar{Q}}_{t_k|t_0}\mathbf{1}_M = 1\cdot\mathbf{1}_M$. Since eigenvectors have scaling freedom and the eigenvector corresponding to the unit eigenvalue must be parallel to $\mathbf{1}_M$, we set the first eigenvalue–eigenvector pair of $\overrightarrow{\mathbf{R}_t}$ to $(0, \mathbf{1}_M)$ without loss of generality.

However, directly solving \textsf{P0} is computationally intractable in practice. First, \textsf{P0} requires optimizing the continuous-time function $\mathbf{\Sigma}(t)$ over the entire interval $0\leq t\leq1$, which necessitates a complex functional parameterization. Second, the matrix exponential constraint in \eqref{eq:p0c5} makes the problem highly nonlinear and non-convex. Moreover, the proposed discrete DM does not explicitly use the rate matrix during training or inference, and it only requires the transition matrices at the predefined diffusion time points $\{t_k\}_{k = 1}^{T}$. Therefore, we reduce \textsf{P0} to the following finite-dimensional transition-matrix fitting problem, whose solution is directly used to define the discrete-time diffusion process.

\begin{subequations}
\begin{align}
    (\textsf{P1})\quad\;\argmin_{\mathbf{V}, \{\mathbf{D}(t_k)\}_{k = 1}^T}&\sum_{k = 1}^T||\mathbf{Q}_{t_k|t_0} - \mathbf{\bar{Q}}_{t_k|t_0}||_\mathsf{F}^2,\label{eq:p1}\\
    \nonumber\text{s.t.}\quad\quad\eqref{eq:p0c1},&~\eqref{eq:p0c2},~\eqref{eq:p0c3},~\eqref{eq:p0c4},~\eqref{eq:p0c9}\\
    [\mathbf{D}(t_k)]_{1, 1} &= 1~\text{for}~k\in[1{:}T],\label{eq:p1c6}\\
    [\mathbf{D}(t_k)]_{i, i} &\geq 0~\text{for}~i\in[2{:}M], k\in[1{:}T],\label{eq:p1c7}\\
    [\mathbf{D}(t_{k - 1})]_{i, i} &\geq [\mathbf{D}(t_k)]_{i, i}~\text{for}~i\in[2{:}M], k\in[1{:}T].\label{eq:p1c8}
\end{align}
\end{subequations}
\eqref{eq:p1c6} follows from \eqref{eq:p0c5} and \eqref{eq:p0c10}, since $[\mathbf{D}(t_k)]_{1, 1} = \mathsf{exp}(\int_0^{t_k}[\mathbf{\Sigma}(t')]_{1, 1}dt') = 1$. Together, \eqref{eq:p0c9} and \eqref{eq:p1c6} guarantee that each $\mathbf{\bar{Q}}_{t_k|t_0}$ satisfies the unit row-sum property, i.e., $\sum_{i = 1}^M[\mathbf{\bar{Q}}_{t_k|t_0}]_{j, i} = 1$ for all $j\in[1{:}M]$. \eqref{eq:p1c7} ensures the non-negativity of $[\mathbf{D}(t_k)]_{i, i}$, which is consistent with the exponential form in \eqref{eq:p0c5}. Lastly, \eqref{eq:p1c8} imposes the non-increasing property of the diagonal elements of $\mathbf{D}(t_k)$ over diffusion time points. Here, we define $\mathbf{D}(t_0) = \mathbf{I}_M$ based on the identity $\mathbf{Q}_{t_0|t_0} = \mathbf{I}_M$. Since \textsf{P1} relaxes the explicit rate matrix constraints in \textsf{P0}, \eqref{eq:p1c8} is introduced as a necessary condition for the fitted transition family to be compatible with an underlying CTMC rate matrix representation, as shown in the following lemma.

\begin{lemma}\label{lm:non-increasing diagonal}
Assume that there exist rate matrices $\overrightarrow{\mathbf{R}_t}$ corresponding to the solution of \emph{\textsf{P1}} for $0 \leq t \leq 1$. Then, for $i\in[1{:}M]$, a time sequence of each diagonal element, i.e., $\{[\mathbf{D}(t_{k})]_{i, i}\}_{k = 0}^{T}$, is non-increasing.
\begin{IEEEproof}
    See \emph{Appendix \ref{apdx:non-increasing diagonal}}.
\end{IEEEproof}
\end{lemma}

Although $\textsf{P1}$ is more tractable than $\textsf{P0}$, it remains non-convex and involves many time steps $T$, making the optimization computationally demanding. In addition, jointly estimating all transition matrices at once may lead to suboptimal approximations due to accumulated fitting errors. To address these challenges, we sample a subsequence $\{t_{k_\ell}\}_{\ell = 1}^{T'}$ including $T'$ time steps with $T' < T$, where $k_\ell\in[1{:}T]$ and $k_{T'} = T$, and solve the following reduced problem as an initial step.
\begin{align}
    (\textsf{P2})\quad\argmin_{\mathbf{V}, \{\mathbf{D}(t_{k_\ell})\}_{\ell = 1}^{T'}}&\sum_{\ell = 1}^{T'}||\mathbf{Q}_{t_{k_\ell}|t_0} - \mathbf{\bar{Q}}_{t_{k_\ell}|t_0}||_\mathsf{F}^2,\label{eq:p2}\\
    \nonumber\text{s.t.}\quad\quad\mathbf{\bar{Q}}_{t_{k_\ell}|t_0} &= \mathbf{V}\mathbf{D}(t_{k_\ell})\mathbf{V}^{-1}~\text{for}~\ell\in[1{:}T'],\\
    \nonumber\mathbf{\bar{Q}}_{t_{k_\ell}|t_{k_{\ell - 1}}} &= \mathbf{V}\mathbf{D}(t_{k_{\ell - 1}})^{-1}\mathbf{D}(t_{k_\ell})\mathbf{V}^{-1}\\
    \nonumber&\quad\quad\quad\quad\quad\quad\quad\;\text{for}~\ell\in[1{:}T'],\\
    \nonumber[\mathbf{\bar{Q}}_{t_{k_\ell}|t_0}]_{i, j} &\geq 0~\text{for}~i, j\in[1{:}M], \ell\in[1{:}T'],\\
    \nonumber[\mathbf{\bar{Q}}_{t_{k_\ell}|t_{k_{\ell - 1}}}]_{i, j} &\geq 0~\text{for}~i, j\in[1{:}M], \ell\in[1{:}T'],\\
    \nonumber[\mathbf{V}]_{:, 1} &= \mathbf{1}_M,\\
    \nonumber[\mathbf{D}(t_{k_\ell})]_{1, 1} &= 1~\text{for}~\ell\in[1{:}T'],\\
    \nonumber[\mathbf{D}(t_{k_\ell})]_{i, i} &\geq 0~\text{for}~i\in[2{:}M],\ell\in[1{:}T'],\\
    \nonumber[\mathbf{D}(t_{k_{\ell - 1}})]_{i, i} &\geq [\mathbf{D}(t_{k_\ell})]_{i, i}~\text{for}~i\in[2{:}M], \ell\in[1{:}T'].
\end{align}

\textsf{P2} estimates only the coarse-grained transition matrices $\{\mathbf{\bar{Q}}_{t_{k_\ell}|t_0}\}_{\ell = 1}^{T'}$ under $\{\sigma_{t_{k_\ell}}^2\}_{\ell = 1}^{T'}$, rather than the full set $\{\mathbf{\bar{Q}}_{t_{k}|t_0}\}_{k = 1}^{T}$. To solve this optimization, we adopt a block coordinate descent strategy using the Adam optimizer, where the common eigenvector matrix $\mathbf{V}$ and the set of diagonal matrices $\{\mathbf{D}(t_{k_\ell})\}_{\ell = 1}^{T'}$ are alternatively updated\footnote{Due to the structural complexity and non-convexity involving matrix inverses, we employ the Adam optimizer, a stochastic gradient descent--based method, within an automatic-differentiation framework. Note that Adam is used here as a general-purpose adaptive gradient optimizer for the matrix variables, not as a conventional NN training procedure. This practical approach simplifies optimization by avoiding complicated or non-trivial derivation of complex matrix gradients while achieving high approximation fidelity, as confirmed in Section \ref{subsec:transition matrix similarity}.}. Through this alternating minimization, we reduce the approximation error between the original transition matrices and their eigen-decomposed counterparts, while enforcing the stochastic properties required for valid transition matrices. The overall optimization procedure is detailed in \textbf{Algorithm \ref{alg:block coordinate descent algorithm}}\footnote{A diagonal matrix $\mathbf{D}(t_k)$ is defined by its diagonal vector $\mathsf{diag}(\mathbf{D}(t_k))\in\mathbb{R}^M$, where $\mathsf{diag}(\mathbf{A})$ returns the vector containing the diagonal entries of a matrix $\mathbf{A}\in\mathbb{R}^{M\times M}$.}.

The \textit{row-stochasticity} of $\{\mathbf{\bar{Q}}_{t_{k_\ell}|t_0}\}_{\ell = 1}^{T'}$ is ensured by fixing $[\mathbf{V}]_{:, 1} = \mathbf{1}_M$ and $[\mathbf{D}(t_{k_\ell})]_{1, 1} = 1$ for $\ell\in[1{:}T']$, as described in line \ref{line:row-stochasticity} of \textbf{Algorithm \ref{alg:block coordinate descent algorithm}}. Accordingly, they remain unchanged during the optimization process. The objective function for the optimization is defined as $\mathcal{L}_{\textsf{P2}} = \sum_{\ell = 1}^{T'}\mathcal{L}_{t_{k_\ell}}$, where
\begin{align}\label{eq:bcd loss function}
    \mathcal{L}_{t_{k_\ell}} &= ||\mathbf{Q}_{t_{k_\ell}|t_0} - \bar{\mathbf{Q}}_{t_{k_\ell}|t_0} ||_\mathsf{F}^2 + \lambda_1||\mathsf{ReLU}(-\bar{\mathbf{Q}}_{t_{k_\ell}|t_0})||_\mathsf{F}^2\\
    \nonumber&+ \lambda_2||\mathsf{ReLU}(-\bar{\mathbf{Q}}_{t_{k_\ell}|t_{k_{\ell - 1}}})||_\mathsf{F}^2 + \lambda_3||\mathsf{ReLU}(-\mathbf{D}(t_{k_\ell}))||_\mathsf{F}^2.
\end{align}
Here, $\lambda_1$, $\lambda_2$ and $\lambda_3$ are weight hyperparameters, and $\mathsf{ReLU}(\mathbf{A})$ denotes the element-wise rectified linear unit (ReLU) function defined as $\max(0, [\mathbf{A}]_{i, j})$ for all entries of a matrix $\mathbf{A}\in\mathbb{R}^{M\times M}$. The first term in \eqref{eq:bcd loss function} represents the primary approximation loss. The second and third terms penalize negative entries in the cumulative transition matrices at the subsequence time points and the one-step transition matrices between adjacent subsequence time points, respectively, to enforce non-negativity required for valid transition matrices. The last term induces the non-negativity of the diagonal matrices. After updating the diagonal entries, we explicitly \textit{enforce} this non-negativity constraint, as written in line \ref{line:enforce} of \textbf{Algorithm \ref{alg:block coordinate descent algorithm}}. In addition, to impose the \textit{non-increasing} property of the diagonal elements in \textbf{Lemma \ref{lm:non-increasing diagonal}}, we apply the pool adjacent violators algorithm (PAVA) \cite{best1990active}, a widely used non-increasing projection method, to the sequence $[\mathbf{D}(t_{k_\ell})]_{i, i}$ for each $i\in[2{:}M]$, as written in line \ref{line:non-increasing} of \textbf{Algorithm \ref{alg:block coordinate descent algorithm}}.

\begin{algorithm}[!t]
\caption{Block coordinate descent algorithm for \textsf{P2}}\label{alg:block coordinate descent algorithm}
\begin{algorithmic}[1]
\State{\textbf{Input:} Transition matrices $\{\mathbf{Q}_{t_{k_\ell}|t_0}\}_{\ell = 1}^{T'}$, weight hyperparameters $\lambda_1$, $\lambda_2$ and $\lambda_3$}
\State{Initialize $\mathbf{V}$ with $[\mathbf{V}]_{:, 1} = \mathbf{1}_M$ and $\{\mathsf{diag}(\mathbf{D}(t_{k_\ell}))\}_{\ell = 1}^{T'}$ with $[\mathbf{D}(t_{k_\ell})]_{1, 1} = 1$ and $[\mathbf{D}(t_{k_\ell})]_{i, i} \geq 0$}
\State{Freeze $[\mathbf{V}]_{:, 1}$ and $\{[\mathbf{D}(t_{k_\ell})]_{1, 1}\}_{\ell = 1}^{T'}$}\label{line:row-stochasticity}\Comment{\textit{row-stochasticity}}
\While{not converged}
    \Statex{\textit{\textbf{Step 1: Update $\mathbf{V}$}}}
    \State{Freeze $\{\mathbf{D}(t_{k_\ell})\}_{\ell = 1}^{T'}$}
    \State{$\{\mathbf{\bar{Q}}_{t_{k_\ell}|t_0}\}_{\ell = 1}^{T'} \gets \{\mathbf{V}\mathbf{D}(t_{k_\ell})\mathbf{V}^{-1}\}_{\ell = 1}^{T'}$}
    \State{$\{\mathbf{\bar{Q}}_{t_{k_\ell}|t_{k_{\ell - 1}}}\}_{\ell = 1}^{T'} \gets \{\mathbf{V}\mathbf{D}(t_{k_{\ell - 1}})^{-1}\mathbf{D}(t_{k_\ell})\mathbf{V}^{-1}\}_{\ell = 1}^{T'}$}
    \State{Compute loss $\mathcal{L}_{\textsf{P2}}$ referring \eqref{eq:bcd loss function}}
    \State{Update $\mathbf{V}$ via Adam optimizer}
    \Statex{\textit{\textbf{Step 2: Update $\{\mathbf{D}(t_{k_\ell})\}_{\ell = 1}^{T'}$}}}
    \State{Freeze $\mathbf{V}$}
    \State{$\{\mathbf{\bar{Q}}_{t_{k_\ell}|t_0}\}_{\ell = 1}^{T'} \gets \{\mathbf{V}\mathbf{D}(t_{k_\ell})\mathbf{V}^{-1}\}_{\ell = 1}^{T'}$}
    \State{$\{\mathbf{\bar{Q}}_{t_{k_\ell}|t_{k_{\ell - 1}}}\}_{\ell = 1}^{T'} \gets \{\mathbf{V}\mathbf{D}(t_{k_{\ell - 1}})^{-1}\mathbf{D}(t_{k_\ell})\mathbf{V}^{-1}\}_{\ell = 1}^{T'}$}
    \State{Compute loss $\mathcal{L}_{\textsf{P2}}$ referring \eqref{eq:bcd loss function}}
    \State{Update $\{\mathsf{diag}(\mathbf{D}(t_{k_\ell}))\}_{\ell = 1}^{T'}$ via Adam optimizer}
    \State{$\{\mathbf{D}(t_{k_\ell})\}_{\ell = 1}^{T'} \gets \{\mathsf{ReLU}(\mathbf{D}(t_{k_\ell}))\}_{\ell = 1}^{T'}$}\label{line:enforce}\Comment{\textit{enforce}}
    \State{Apply PAVA on $\{\mathsf{diag}(\mathbf{D}(t_{k_\ell}))\}_{\ell = 0}^{T'}$\label{line:non-increasing}}\Comment{\textit{non-increasing}}
\EndWhile
\State{\textbf{Output:} Optimized $\mathbf{V}$ and $\{\mathbf{D}(t_{k_\ell})\}_{\ell = 1}^{T'}$}
\end{algorithmic}
\end{algorithm}

\begin{figure}[!t]
    \centering
    \includegraphics[width=0.8\linewidth]{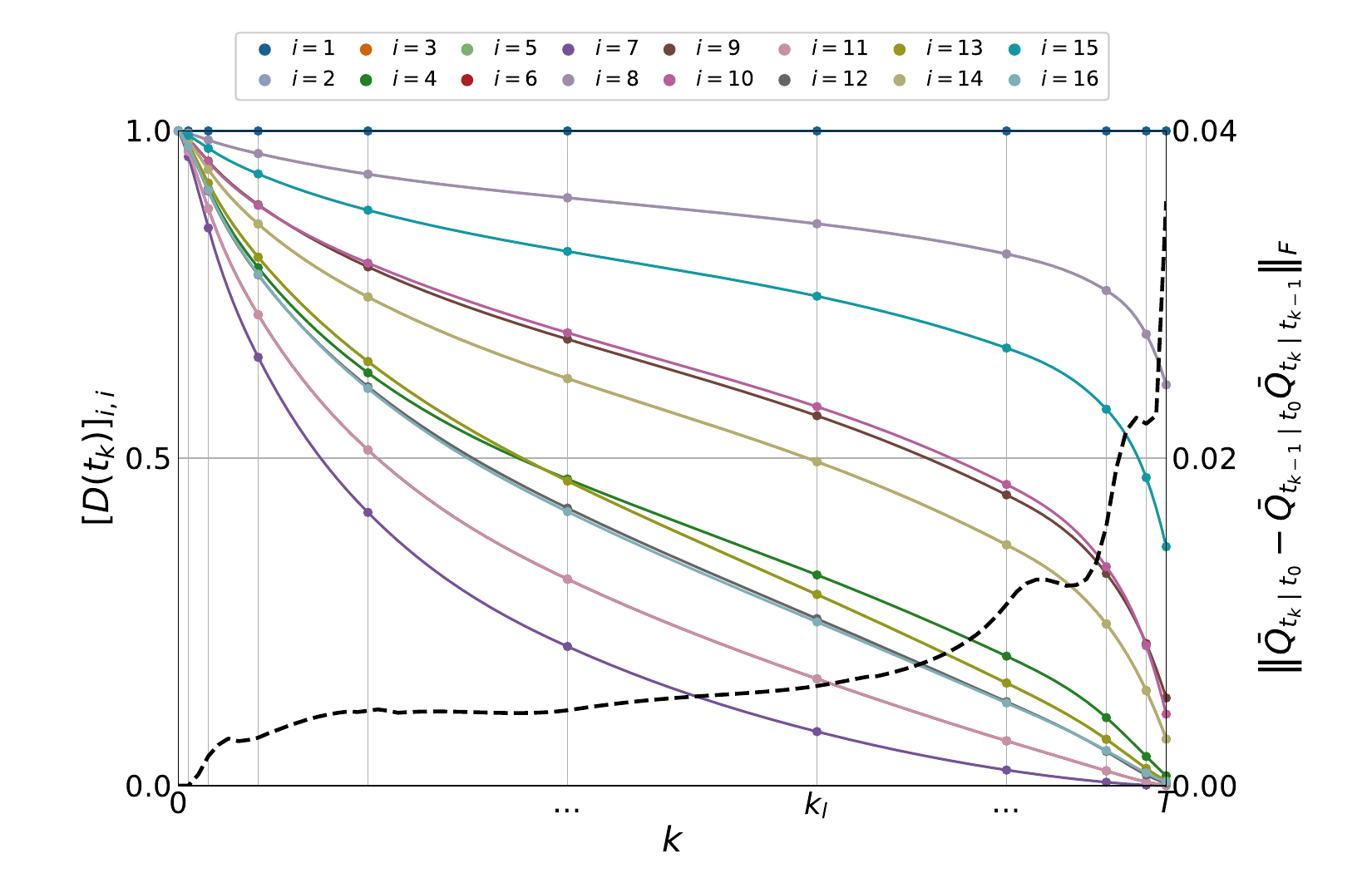}
    \caption{Diagonal entries of $\{\mathbf{D}(t_k)\}_{k = 0}^T$ obtained via spline interpolation from the optimized subsampled matrices $\{\mathbf{D}(t_{k_\ell})\}_{\ell = 1}^{T'}$ and Chapman-Kolmogorov consistency errors $\{\|\bar{\mathbf{Q}}_{t_k|t_0} - \bar{\mathbf{Q}}_{t_{k - 1}|t_0}\bar{\mathbf{Q}}_{t_k|t_{k - 1}}\|_\mathsf{F}\}_{k = 1}^T$ after ReLU and row-normalization, shown by the dashed line, when $M = 16$.}
    \label{fig:spline_D}
\end{figure}

To address the original problem \textsf{P1}, once the coarse-grained transition matrices $\{\mathbf{\bar{Q}}_{t_{k_\ell}|t_0}\}_{\ell = 1}^{T'}$ are obtained by solving \textsf{P2}, the remaining intermediate transition matrices are constructed by applying cubic spline interpolation\footnote{Although other interpolation methods, such as monotone or linear interpolation, are also applicable, we observed similar transition-matrix fitting and reconstruction performance across different interpolation schemes. Therefore, we adopt cubic spline interpolation because it provides smooth interpolation of the diagonal trajectories.} \cite{mckinley1998cubic} to the diagonal matrices $\{\mathbf{D}(t_{k_\ell})\}_{\ell = 0}^{T'}$, as shown in Fig. \ref{fig:spline_D}. This interpolation of $\{\mathbf{D}(t_{k_\ell})\}_{\ell = 0}^{T'}$ ensures that the approximated transition matrices $\{\mathbf{\bar{Q}}_{t_k|t_0}\}_{k = 1}^{T}$ preserve the eigen-decomposition structure $\mathbf{\bar{Q}}_{t_k|t_0} = \mathbf{V}\mathbf{D}(t_k)\mathbf{V}^{-1}$ in \eqref{eq:p0c1}, where the eigenmatrix $\mathbf{V}$ is shared across time points $k\in[1{:}T]$. In contrast, directly interpolating $\{\mathbf{\bar{Q}}_{t_{k_\ell}|t_0}\}_{\ell = 1}^{T'}$ cannot guarantee this shared-eigenbasis structure. The use of diagonal interpolation also allows the non-increasing property in \textbf{Lemma \ref{lm:non-increasing diagonal}} to be directly controlled along the diagonal trajectories, which helps maintain a physically consistent evolution of the diffusion process over time.

Finally, the approximated transition matrices are obtained as $\mathbf{\bar{Q}}_{t_k|t_0} = \mathbf{V}\mathbf{D}(t_k)\mathbf{V}^{-1}$, and the corresponding single-step transition matrices are computed as $\mathbf{\bar{Q}}_{t_{k}|t_{k - 1}} = \mathbf{\bar{Q}}_{t_{k - 1}|t_0}^{-1}\mathbf{\bar{Q}}_{t_k|t_0} = \mathbf{V}\mathbf{D}(t_{k - 1})^{-1}\mathbf{D}(t_k)\mathbf{V}^{-1}$. Since small negative entries may arise due to numerical fitting and interpolation errors, $\mathsf{ReLU}(\bar{\mathbf{Q}}_{t_k|t_0})$ and $\mathsf{ReLU}(\bar{\mathbf{Q}}_{t_k|t_{k - 1}})$ followed by row-normalization are applied before using them in the discrete DM to ensure row-stochasticity. Although the interpolation, ReLU, and row-normalization are performed separately from \textbf{Algorithm \ref{alg:block coordinate descent algorithm}} and may therefore introduce residual Chapman-Kolmogorov inconsistency, Fig. \ref{fig:spline_D} shows that the resulting consistency errors remain sufficiently small across all diffusion steps. The resulting transition matrices are then used for DM training and sampling, which is detailed in Section \ref{sec:training and sampling algorithms}.

Certainly, although the approximated solution matrices for the discrete DM define a slightly altered symbol detection region, the receiver still performs symbol detection using the optimal region. Moreover, the optimized transition matrices closely approximate the true symbol detection matrices, resulting in superior reconstruction quality by enforcing the Markov property, as demonstrated in Section \ref{sec:experimental results}.

\begin{figure}[!t]
    \centering
    \includegraphics[width=0.85\linewidth]{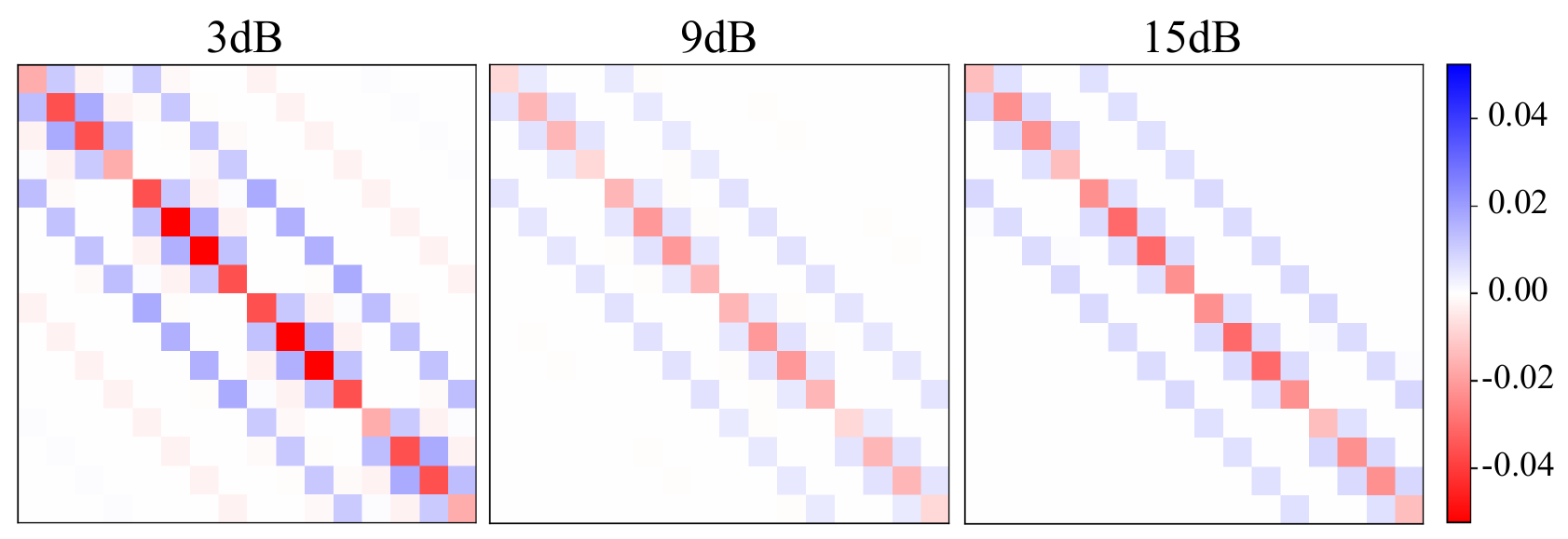}
    \caption{Numerically reconstructed rate matrices from the optimized transition matrices under different SNR values of 3dB, 9dB, and 15dB when $M = 16$.}
    \label{fig:rate matrix validation}
\end{figure}

\begin{remark}
Since \emph{\textsf{P1}} is a tractable reduction of the CTMC-constrained problem \emph{\textsf{P0}}, it does not theoretically guarantee all CTMC generator conditions in \emph{\textbf{Proposition \ref{prop:rate matrix properties}}}. Specifically, \emph{\textsf{P1}} ensures the row-stochasticity of the optimized transition matrices, but does not explicitly enforce the element-wise sign conditions of the corresponding rate matrix. Nevertheless, as shown in Fig. \ref{fig:rate matrix validation}, the rate matrices reconstructed from the optimized transition matrices, i.e., $\overrightarrow{\mathbf{R}_t} = \mathbf{V}\frac{\partial\mathsf{ln}(\mathbf{D}(\tau))}{\partial \tau}|_{\tau = t}\mathbf{V}^{-1}$, exhibit CTMC-compatible behavior in practice: their diagonal entries are clearly negative, and their off-diagonal entries are mostly non-negative or only slightly negative with small magnitudes. This numerical observation supports that the proposed solution provides an approximate Markov projection of the channel-induced transition dynamics, while the discrete DM itself directly operates on the optimized transition matrices rather than the rate matrices.
\end{remark}

\subsection{Semantics-Informed Diffusion Model Architecture}\label{subsec:semantics-informed diffusion model architecture}
For every reverse diffusion step, DMs typically employ the U-Net architecture \cite{ronneberger2015u} to obtain the conditional probability $p_{t_{k - 1}|t_k}^{\boldsymbol{\theta}}(\mathbf{u}_{t_{k - 1}}|\mathbf{u}_{t_k})\in\mathbb{R}^{N\times M}$, where $\mathbf{u}_{t_k}\in[1{:}M]^N$ is the corrupted sample consisting of the state indices $\{[\mathbf{u}_{t_k}]_i\}_{i = 1}^N$. In \cite{austin2021structured}, $\mathbf{u}_{t_k}$ is directly processed by the U-Net to compute the logits $\boldsymbol{\Phi}_{t_k}\in\mathbb{R}^{N\times M}$. A row-wise softmax is then applied as $p_{t_0|t_k}^{\boldsymbol{\theta}}(\mathbf{u}_{t_0}|\mathbf{u}_{t_k}) = \mathsf{Softmax}(\boldsymbol{\Phi}_{t_k})\in\mathbb{R}^{N\times M}$, which is subsequently used in \eqref{eq:learned reverse kernel} to compute $p_{t_{k - 1}|t_k}^{\boldsymbol{\theta}}$. However, $\mathbf{u}_{t_k}$ was treated purely as a sequence of categorical indices, despite each index corresponding to a learned codeword in the VQ codebook $\mathbf{C}\in\mathbb{R}^{M\times d}$. In this setting, therefore, the U-Net is trained without any explicit knowledge of the semantic structure of the states, limiting its ability to model relationships within the latent feature space.

To enable the DM to better estimate conditional probabilities by exploiting semantic information, we propose incorporating the pretrained VQ codebook $\mathbf{C}$ into the U-Net input. Specifically, before entering the U-Net, $\mathbf{u}_{t_k}$ is embedded through the given codebook as $\mathbf{W}_{t_k}\mathbf{C}\in\mathbb{R}^{N\times d}$, where $\mathbf{W}_{t_k}\in\mathbb{R}^{N\times M}$ is a matrix of row-wise one-hot vectors corresponding to $[\mathbf{u}_{t_k}]_i$ for $i\in[1{:}N]$. This embedding step injects the relational structure among the $N$ indices from the $M$ possible states, leveraging the semantic organization already learned in the VQ codebook. As a result, the U-Net can perform probability estimation in a more semantically informed manner, while reflecting the true digital communication channel behavior. Moreover, this embedding mechanism strengthens the integration between the DM and the JSCC structure, which is expected to yield a cooperative effect. Following \cite{austin2021structured}, the final conditional probability $p_{t_{k - 1}|t_k}^{\boldsymbol{\theta}}$ is obtained by applying the $\mathsf{Softmax}$ function to the output logits $\boldsymbol{\Phi}_{t_k}$ to compute $p_{t_0|t_k}^{\boldsymbol{\theta}}$, and then incorporating it into the reverse kernel modified from \eqref{eq:learned reverse kernel}, given by
\begin{align}\label{eq:modified learned reverse kernel}
    &p_{t_{k - 1}|t_k}^{\boldsymbol{\theta}}(u_{t_{k - 1}}|u_{t_k}, \mathbf{C})\\
    \nonumber&= \sum\nolimits_{u_{t_0} = 1}^M q_{t_{k - 1}|t_k, t_0}(u_{t_{k - 1}}|u_{t_k}, u_{t_0})p_{t_0|t_k}^{\boldsymbol{\theta}}(u_{t_0}|u_{t_k}, \mathbf{C}).
\end{align}

\section{Training and Inference Algorithms}\label{sec:training and sampling algorithms}
In this section, we present the training and inference algorithms for the SSCDM. The training procedure first pretrains the JSCC model without the SSCDM and then optimizes the SSCDM while keeping the JSCC model fixed. Furthermore, we introduce the novel loss function used for training the JSCC encoder-decoder and the VQ codebook. This proposed loss encourages the learned codebook to form a locally consistent topology in the latent space, ensuring that neighboring constellation symbols correspond to semantically related codewords. Then, the inference process at the receiver is subsequently described, where the SSCDM corrects symbol errors prior to the JSCC decoder.

\subsection{Training Method}\label{subsec:training method}
The proposed discrete DM-aided digital semantic communication is trained through a two-stage procedure that jointly optimizes the JSCC model parameters $\{\boldsymbol{\psi}_\mathsf{Tx}, \boldsymbol{\psi}_\mathsf{Rx}\}$, the VQ codebook $\mathbf{C}$, and the DM parameters $\boldsymbol{\theta}$. The training utilizes a dataset $\mathcal{D}$, approximated transition matrices $\{\mathbf{\bar{Q}}_{t_{k}|t_0}\}_{k = 1}^{T}$, the loss weights $\{\alpha, \beta, \gamma, \lambda\}$, and training SNR $\eta_{\mathsf{train}}$.

In the first stage, the VQ-based JSCC system without SSCDM is optimized in a mini-batch manner. For each iteration, mini-batch samples are transmitted through an AWGN channel with $\eta_\mathsf{train}$ employing the JSCC encoder $f_{\boldsymbol{\psi}_\mathsf{Tx}}$, decoder $g_{\boldsymbol{\psi}_\mathsf{Rx}}$, and learnable codebook $\mathbf{C}$. We adopt a reconstruction-based loss function provided in \cite{huh2025universal}:
\begin{align}\label{eq:image reconstruction loss}
\begin{split}
     \mathcal{L}_\mathsf{VQ} = \mathbb{E}_\mathbf{X}[\mathsf{MSE}(\hat{\mathbf{X}}, \mathbf{X}) &+ \alpha \mathsf{MSE}(\hat{\mathbf{Y}}, \mathsf{sg}(\mathbf{Y}))\\
     &+ \beta \mathsf{MSE}(\mathbf{Y}, \mathsf{sg}(\hat{\mathbf{Y}}))],
\end{split}
\end{align}
where $\mathsf{MSE}(\mathbf{A}, \mathbf{B})$ represents the mean squared error (MSE) between two inputs $\mathbf{A}$ and $\mathbf{B}$, and $\mathsf{sg}(\mathbf{A})$ is the stop-gradient operation that prevents the flow of gradients through $\mathbf{A}$ during backpropagation. Each component of the above loss serves a distinct function:
\begin{itemize}
    \item The first term ensures that the encoder and decoder reconstruct the input image accurately, accounting for distortions introduced by both the VQ and the channel noise. Since the VQ operation is non-differentiable, a straight-through estimator \cite{van2017neural} is employed to enable gradient propagation from $\mathbf{Y}$ to $\hat{\mathbf{Y}}$.
    \item The second term guides the optimization of the codebook by pulling the quantized output $\hat{\mathbf{Y}}$ closer to the encoder output $\mathbf{Y}$, which has been degraded by channel noise. The stop-gradient on $\mathbf{Y}$ prevents encoder parameters from being affected during this update.
    \item The third term acts as a commitment loss, promoting stability of the codebook convergence by encouraging the encoder output $\mathbf{Y}$ to stay close to its quantized counterpart $\hat{\mathbf{Y}}$. The stop-gradient on $\hat{\mathbf{Y}}$ ensures that only the encoder is updated, leaving the codebook unchanged during this step.
\end{itemize}
After the image reconstruction, $\mathcal{L}_\mathsf{VQ}$ is then computed, and $\{\boldsymbol{\psi}_\mathsf{Tx}, \boldsymbol{\psi}_\mathsf{Rx}, \mathbf{C}\}$ are updated via the Adam optimizer.

In the second stage, with the pretrained VQ-based JSCC parameters $\{\boldsymbol{\psi}_\mathsf{Tx}^*, \boldsymbol{\psi}_\mathsf{Rx}^*, \mathbf{C}^*\}$ fixed, the discrete DM parameters $\boldsymbol{\theta}$ are optimized. For each iteration, mini-batch data are encoded into index vectors using the encoder and VQ codebook, where each index vector $\mathbf{z}$ serves as the clean data sample $\mathbf{u}_{t_0}$ for the SSCDM training. After a time point $t_k$ is selected by sampling $k$ from a discrete uniform distribution $\mathsf{Unif}[1{:}T]$, $\mathbf{u}_{t_0}$ is randomly corrupted into $\mathbf{u}_{t_k}$ by the transition matrix $\mathbf{\bar{Q}}_{t_{k}|t_0}$, which captures the characteristics of digital communication channel while satisfying the Markov property. The perturbed sample $\mathbf{u}_{t_k}$ is subsequently refined into $\hat{\mathbf{u}}_{t_0}$ by $\mathsf{DM}_{\boldsymbol{\theta}}$, which first embeds $\mathbf{u}_{t_k}$ using $\mathbf{C}$ to exploit the geometry of the learned semantic latent space. The overall training objective of the SSCDM follows $\mathcal{L}_\mathsf{\lambda}$ in \eqref{eq:D3PM loss}, and the parameters $\boldsymbol{\theta}$ of $\mathsf{DM}_{\boldsymbol{\theta}}$ are updated using the Adam optimizer.

\subsection{SOM-Based Codebook Learning}\label{subsec:som-based codebook learning}
\begin{algorithm}[!t]
\caption{Training algorithm}\label{alg:training algorithm}
\begin{algorithmic}[1]
    \State{\textbf{Input: }Dataset $\mathcal{D}$, Markov-enforced transition matrices $\{\mathbf{\bar{Q}}_{t_{k}|t_0}\}_{k = 1}^{T}$, hyperparameters $\{\alpha, \beta, \gamma, \lambda\}$, training SNR $\eta_{\mathsf{train}}$}
    \State{Initialize training parameters $\{\boldsymbol{\psi}_\mathsf{Tx}, \boldsymbol{\psi}_\mathsf{Rx}, \mathbf{C}, \boldsymbol{\theta}\}$}
    \Statex\textit{\textbf{Stage 1: Train JSCC without SSCDM}}
    \While{not converged}
        \State{Sample $\mathbf{X}\in\mathcal{D}$}\Comment{\textit{mini-batch data}}
        \State{Encode $\mathbf{X}$ into $\mathbf{Y}$ using $f_{\boldsymbol{\psi}_\mathsf{Tx}}$}
        \State{Quantize $\mathbf{Y}$ into $\mathbf{z}$ using $\mathbf{C}$ and modulate into $\mathbf{s}$}
        \State{Receive $\hat{\mathbf{s}}$ via AWGN channel with $\eta_\mathsf{train}$}
        \State{Demodulate $\hat{\mathbf{s}}$ into $\tilde{\mathbf{z}}$ and dequantize into $\hat{\mathbf{Y}}$ using $\mathbf{C}$}
        \State{Decode $\hat{\mathbf{Y}}$ into $\hat{\mathbf{X}}$ using $g_{\boldsymbol{\psi}_\mathsf{Rx}}$}
        \State{Calculate $\mathcal{L}_\mathsf{VQ\text{-}SOM}$ as \eqref{eq:total loss}}
        \State{Update $\{\boldsymbol{\psi}_\mathsf{Tx}, \boldsymbol{\psi}_\mathsf{Rx}, \mathbf{C}\}$ via Adam optimizer}
    \EndWhile
    \Statex\textit{\textbf{Stage 2: Train SSCDM for symbol error correction}}
    \State{Freeze $\{\boldsymbol{\psi}_\mathsf{Tx}, \boldsymbol{\psi}_\mathsf{Rx}, \mathbf{C}\}$}
    \While{not converged}
        \State{Sample $\mathbf{X}\in\mathcal{D}$\Comment{\textit{mini-batch data}}}
        \State{Get $\mathbf{z}$ from $\mathbf{X}$ using $f_{\boldsymbol{\psi}_\mathsf{Tx}}$ and $\mathbf{C}$}
        \State{Define $\mathbf{u}_{t_0} = \mathbf{z}$}
        \State{Sample $k \sim \mathsf{Unif}[1{:}T]$}
        \State{Perturb $\mathbf{u}_{t_0}$ into $\mathbf{u}_{t_k}$ using $\mathbf{\bar{Q}}_{t_{k}|t_0}$}
        \State{Correct $\mathbf{u}_{t_k}$ into $\hat{\mathbf{u}}_{t_0}$ using $\mathsf{DM}_{\boldsymbol{\theta}}$}
        \State{Calculate $\mathcal{L}_\mathsf{\lambda}$ as \eqref{eq:D3PM loss}}
        \State{Update $\boldsymbol{\theta}$ via Adam optimizer}
    \EndWhile
    \State{\textbf{Output: }Optimized parameters $\{\boldsymbol{\psi}_\mathsf{Tx}^*, \boldsymbol{\psi}_\mathsf{Rx}^*, \mathbf{C}^*, \boldsymbol{\theta}^*\}$}
\end{algorithmic}
\end{algorithm}

In conventional digital modulation, bit sequences are typically Gray-mapped to minimize the Hamming distance between adjacent symbols. However, a VQ codebook trained only with the reconstruction loss in \eqref{eq:image reconstruction loss} does not inherently preserve such locality in the Euclidean feature space. That is, the codewords corresponding to two neighboring digital symbols may not be closer to each other than those of more distant symbols, unlike the property ensured by Gray-mapping. As a result, although our SSCDM attempts to pull a misdetected symbol index back toward its original one, the corresponding codewords may still be far apart in the feature space. This discrepancy causes a mismatch between the correction capability and the image reconstruction quality, since correcting a symbol index does not necessarily yield a semantically similar codeword to the original one.

To alleviate this issue, we incorporate the concept of SOM \cite{kohonen2002self}. Recall that the second term in \eqref{eq:image reconstruction loss}, i.e., $\mathsf{MSE}(\hat{\mathbf{Y}}, \mathsf{sg}(\mathbf{Y)})$, encourages each received codeword $\mathbf{c}_{\tilde{z}_i}$ to move closer to its corresponding feature vector $\mathbf{y}_i$. The SOM mechanism extends this idea by also attracting the neighboring codewords $\mathbb{N}(\tilde{z}_i)$ on the constellation map toward $\mathbf{y}_i$. Here, $\mathbb{N}(\tilde{z}_i)$ denotes the set of manually defined neighboring symbol indices based on the topology of $\mathbf{s}_{\tilde{z}_i}$. This regularization is implemented by
\begin{align}\label{eq:som loss}
    \mathcal{L}_\mathsf{SOM} = \mathbb{E}_\mathbf{X}\left[\sum\nolimits_{i = 1}^N\sum\nolimits_{j\in\mathbb{N}(\tilde{z}_i)}\lambda_{ij}\mathsf{MSE}(\mathbf{c}_{j}, \mathsf{sg}(\mathbf{y}_i))\right],
\end{align}
where $\lambda_{ij}$ is a weighting coefficient inversely proportional to the Euclidean distance between symbols $\mathbf{s}_i$ and $\mathbf{s}_j$.

Following \cite{fortuin2018som}, we define $\mathbb{N}(\tilde{z}_i)$ as the four nearest neighbors (up, down, left, and right) on the constellation map, and set $\lambda_{ij} = 1$. When fewer than four neighbors exist, only the available ones are used. Finally, the total training objective for the JSCC encoder–decoder and the VQ codebook is given by
\begin{align}\label{eq:total loss}
    \mathcal{L}_\mathsf{VQ\text{-}SOM} = \mathcal{L}_\mathsf{VQ} + \gamma\mathcal{L}_\mathsf{SOM},
\end{align}
where $\gamma$ is a balancing coefficient between $\mathcal{L}_\mathsf{VQ}$ and $\mathcal{L}_\mathsf{SOM}$. By jointly updating the codewords using the SOM regularization, the learned codebook forms a smoother and more topology-preserving manifold in the feature space, analogous to the locality-preserving property of Gray-mapping. The entire training process\footnote{Despite the fixed training SNR in the first stage, the SOM-based codebook learning enhances the channel robustness of the JSCC model in a manner analogous to Gray mapping. In the second stage, SSCDM is trained over varying SNR values for symbol correction. Taken together, these two stages yield an overall training procedure that accounts for various SNR conditions.} is summarized in \textbf{Algorithm \ref{alg:training algorithm}}, while the first stage employs $\mathcal{L}_\mathsf{VQ\text{-}SOM}$ in \eqref{eq:total loss} instead of $\mathcal{L}_\mathsf{VQ}$ in \eqref{eq:image reconstruction loss}.

\subsection{Decoding Process with SSCDM}\label{subsec:decoding process with sscdm}
\begin{algorithm}[!t]
\caption{Receiver inference algorithm}\label{alg:receiver inference algorithm}
\begin{algorithmic}[1]
    \State{\textbf{Input: }$\mathsf{DM}_{\boldsymbol{\theta}}$, $g_{\boldsymbol{\psi}_\mathsf{Rx}}$, $\mathbf{C}$, SNR $\eta$, received index vector $\tilde{\mathbf{z}}$}
    \State{Find the correction starting index $k^*$ from $\eta$ as \eqref{eq:starting point selection}}\label{line:dm start}
    \State{Define $\mathbf{u}_{t_{k^*}} = \tilde{\mathbf{z}}$}
    \For{$k = {k^*}, {k^*-1}, \dots, {2}$}
        \State{Compute $p_{t_{k - 1}|t_k}^{\boldsymbol{\theta}}(\mathbf{u}_{t_{k - 1}}|\mathbf{u}_{t_k}, \mathbf{C})$ using $\mathsf{DM}_{\boldsymbol{\theta}}$ as \eqref{eq:modified learned reverse kernel}}
        \State{Sample $\mathbf{u}_{t_{k - 1}}$ from $p_{t_{k - 1}|t_k}^{\boldsymbol{\theta}}$}
    \EndFor
    \State{Compute $p_{t_0|t_1}^{\boldsymbol{\theta}}(\mathbf{u}_{t_0}|\mathbf{u}_{t_1}, \mathbf{C})$ using $\mathsf{DM}_{\boldsymbol{\theta}}$}
    \State{$\{[\mathbf{u}_{t_0}]_{i}\}_{i = 1}^N \gets \{\argmax_{j} p_{t_0|t_1}^{\boldsymbol{\theta}}([\mathbf{u}_{t_0}]_i = j|\mathbf{u}_{t_1}, \mathbf{C})\}_{i = 1}^N$}
    \State{$\hat{\mathbf{z}} \gets \mathbf{u}_{t_0}$}\label{line:dm end}
    \State{Dequantize $\hat{\mathbf{z}}$ into $\hat{\mathbf{Y}}$ using $\mathbf{C}$}
    \State{Decode $\hat{\mathbf{Y}}$ into $\hat{\mathbf{X}}$ using $g_{\boldsymbol{\psi}_\mathsf{Rx}}$}
    \State{\textbf{Output: }Reconstructed image $\hat{\mathbf{X}}$}
\end{algorithmic}
\end{algorithm}

After the training stages, the receiver employs the SSCDM, i.e., $\mathsf{DM}_{\boldsymbol{\theta}}$, to correct symbol errors prior to image reconstruction, where this whole procedure is presented in \textbf{Algorithm \ref{alg:receiver inference algorithm}}. Given the channel SNR $\eta$ and the predefined noise schedule $\{\sigma_{t_{k}}^2\}_{k = 1}^{T}$, the starting time index $k^*$ of the reverse diffusion process is determined as
\begin{align}\label{eq:starting point selection}
    k^* = \argmin_{k\in[1:T]} \left|\eta - P/\bar{\sigma}_{t_k}^2\right|,
\end{align}
where $\bar{\sigma}_{t_k}^2 = \sum_{k' = 1}^k\sigma_{t_{k'}}^2$. The current channel condition thus determines the starting time point $t_{k^*}$ of the reverse diffusion process, where higher SNR corresponds to $t_{k^*}$ closer to $0$ and lower SNR is mapped to $t_{k^*}$ closer to $1$.

The received index vector $\tilde{\mathbf{z}}$ is then regarded as the perturbed sample $\mathbf{u}_{t_{k^*}}$ in the discrete DM. Starting from $t_{k^*}$, the reverse diffusion proceeds iteratively down to $t_1$. At each step $k$, the reverse kernel $p_{t_{k-1}|t_k}^{\boldsymbol{\theta}}(\mathbf{u}_{t_{k-1}}|\mathbf{u}_{t_k}, \mathbf{C})$ is computed by $\mathsf{DM}_{\boldsymbol{\theta}}$ as defined in \eqref{eq:modified learned reverse kernel}, and $\mathbf{u}_{t_{k-1}}$ is sampled accordingly. Note that, when computing $p_{t_{k-1}|t_k}$, $\mathbf{u}_{t_k}$ is projected onto the semantic latent space by replacing it with the corresponding codeword from the learned codebook $\mathbf{C}$. To ensure stable reconstruction, once the process reaches $t_1$, the final clean data $\mathbf{u}_{t_0}$ is obtained by computing $p_{t_0|t_1}^{\boldsymbol{\theta}}(\mathbf{u}_{t_0}|\mathbf{u}_{t_1}, \mathbf{C})$ via $\mathsf{DM}_{\boldsymbol{\theta}}$ and performing maximum-likelihood symbol selection for each entry, rather than stochastic sampling. The corrected index vector is thus given by $\hat{\mathbf{z}} = \mathbf{u}_{t_0}$. The above operations correspond to lines \ref{line:dm start}-\ref{line:dm end} of \textbf{Algorithm \ref{alg:receiver inference algorithm}}, i.e., $\hat{\mathbf{z}} = \mathsf{DM}_{\boldsymbol{\theta}}(\tilde{\mathbf{z}}, \eta, \mathbf{C})$ in Section \ref{sec:system model}. Finally, $\hat{\mathbf{z}}$ is dequantized into the discretized feature representation $\hat{\mathbf{Y}}$ using the codebook $\mathbf{C}$, and the JSCC decoder $g_{\boldsymbol{\psi}_\mathsf{Rx}}$ reconstructs the image $\hat{\mathbf{X}}$.

\section{Experimental Results}\label{sec:experimental results}
In this section, we evaluate the performance of the proposed SSCDM by employing two standard image reconstruction metrics: the multi-scale structural similarity index measure (MS-SSIM) and the learned perceptual image patch similarity (LPIPS), where higher MS-SSIM and lower LPIPS indicate better image quality. Experiments are conducted on the FFHQ dataset \cite{karras2019style}, which consists of color images with a high resolution of $256\times256$. The NN architecture follows the residual convolutional neural network (CNN)-based design in \cite{huh2025universal}, adapted for the FFHQ dataset by setting $c_1 = 128$ and $c_2 = 256$, with both the encoder and decoder consisting of five layers. This configuration results in a compression ratio of $\frac{N}{CHW} = \frac{1}{192}$.

We assume 16-QAM modulation\footnote{Our method is not restricted to a specific modulation order and can be directly extended to other constellations by adjusting the codebook size.} for transmission, leading to a codebook size of $16$, and each codeword is represented as a 4-dimensional vector. The batch size is set to 32, and the learning rate is initialized to $0.01$ with a step decay scheduler, where the decay rate is $0.5$ and the decay period is 80 epochs. The total number of training epochs is 400, and the training SNR is fixed at $\eta_{\mathsf{train}} = 20$dB to ensure sufficiently reliable symbol observations, while the DM is applied under lower SNR regimes. For optimizing the JSCC encoder–decoder and the VQ codebook in the first training stage, the hyperparameters in $\mathcal{L}_{\mathsf{VQ\text{-}SOM}}$ are configured as $\alpha = 1$, $\beta = 0.25$, and $\gamma = 0.9$.

To optimize SSCDM, a set of Markov-enforced matrices $\{\mathbf{Q}_{t_k|t_0}\}_{k = 1}^T$ with $t_k = k/T$ must be predetermined, which are obtained through \textbf{Algorithm \ref{alg:block coordinate descent algorithm}} and then the cubic spline interpolation. The hyperparameters in the loss function \eqref{eq:bcd loss function} are $\lambda_{1} = \lambda_{3} = 10$ and $\lambda_{2} = 0.001$. We set the maximum diffusion step to $T = 100$, and the noise variance sequence $\{\sigma_{t_k}^2\}_{k = 1}^T$ is designed using the sigmoid-based scheduling method described in Appendix \ref{apdx:noise scheduling method}, covering the SNR range from $-3$dB to $15$dB. For the subsequence $\{t_{k_\ell}\}_{\ell = 1}^{T'}$ with $T' = 10$, the time steps are manually chosen to be denser near the beginning and the end of the full sequence $\{t_k\}_{k = 1}^T$. Specifically, we set $\{k_\ell\}_{\ell = 1}^{T'} = \{2, 4, 9, 20, 40, 65, 84, 94, 98, 100\}$ as indicated by the vertical lines in Fig. \ref{fig:spline_D}.

For DM-based digital symbol error correction modules, we adopt U-Net as the backbone architecture, commonly employed for DMs such as DDPM \cite{ho2020denoising} and D3PM \cite{austin2021structured}. The DM input corresponds to the feature map composed of the VQ codeword indices from the symbol detection. This input is first expanded to a base tensor channel dimension $\kappa$ through an input layer, followed by three down-sampling and up-sampling stages. During these stages, self-attention is applied at the resolution of $16\times16$. In the second training stage, the DM is optimized via $\mathcal{L}_{\mathsf{\lambda}}$ with $\lambda = 0.001$. The batch size and learning rate are set to $32$ and $2\times10^{-4}$, respectively, and the total training iteration is $400,000$.

\begin{figure*}[!t]
    \centering
    \subfigure[Heatmap]{\includegraphics[width=0.67\linewidth]{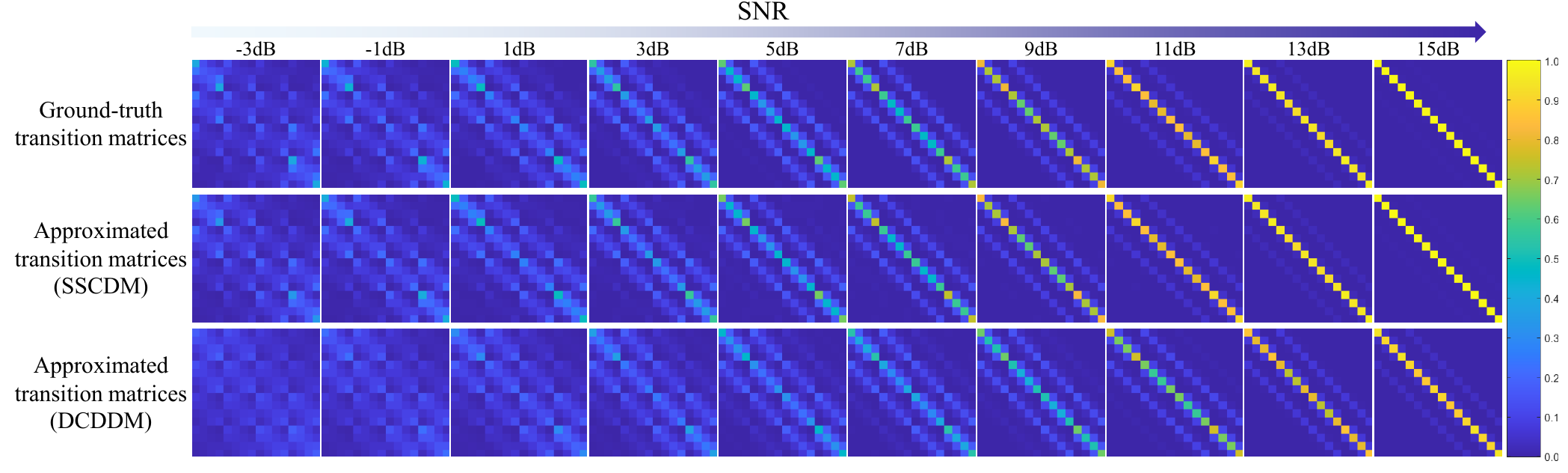}
    \label{fig:transition_matrix_heatmap}}
    \subfigure[NMSE]{\includegraphics[width=0.31\linewidth]{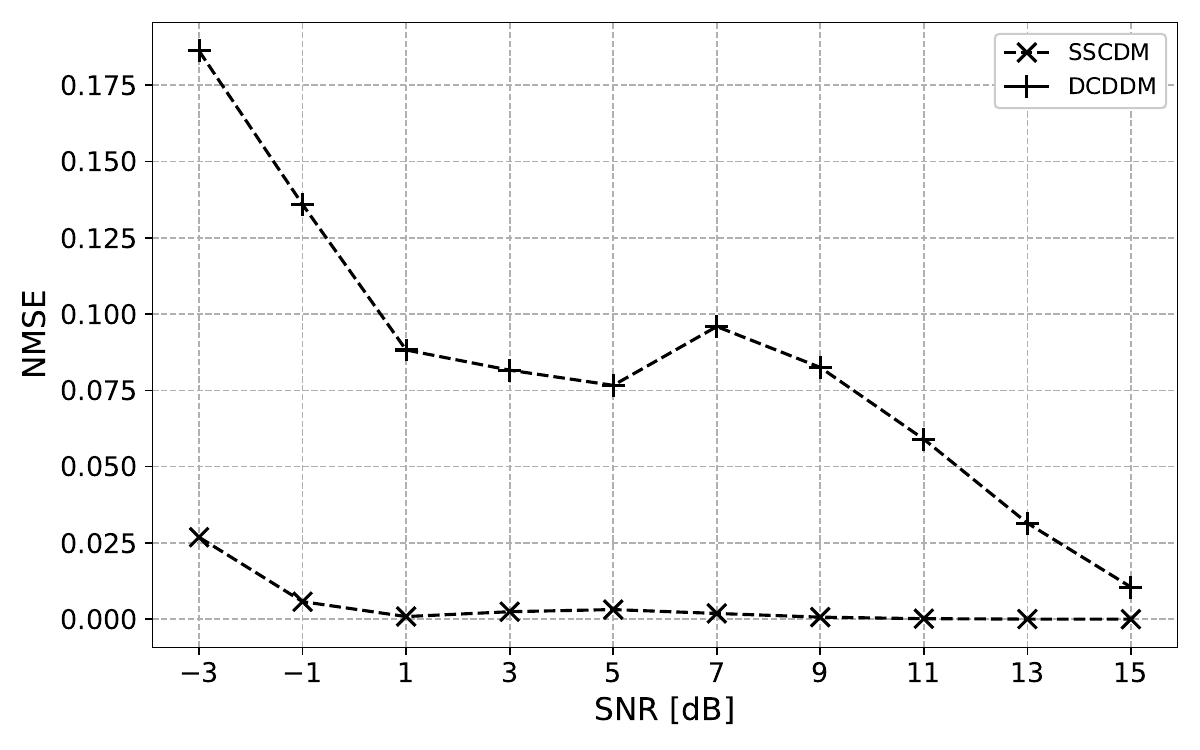}
    \label{fig:transition_matrix_nmse}}
    \caption{(a) Heatmaps of 16-QAM transition matrix for the ground-truth, SSCDM, and DCDDM under different SNR conditions and (b) NMSE of the designed transition matrices of SSCDM and DCDDM.}
    \label{fig:transition_matrix}
\end{figure*}

\begin{figure*}[!t]
    \centering
    \subfigure[FFHQ]{\includegraphics[width=0.32\linewidth]{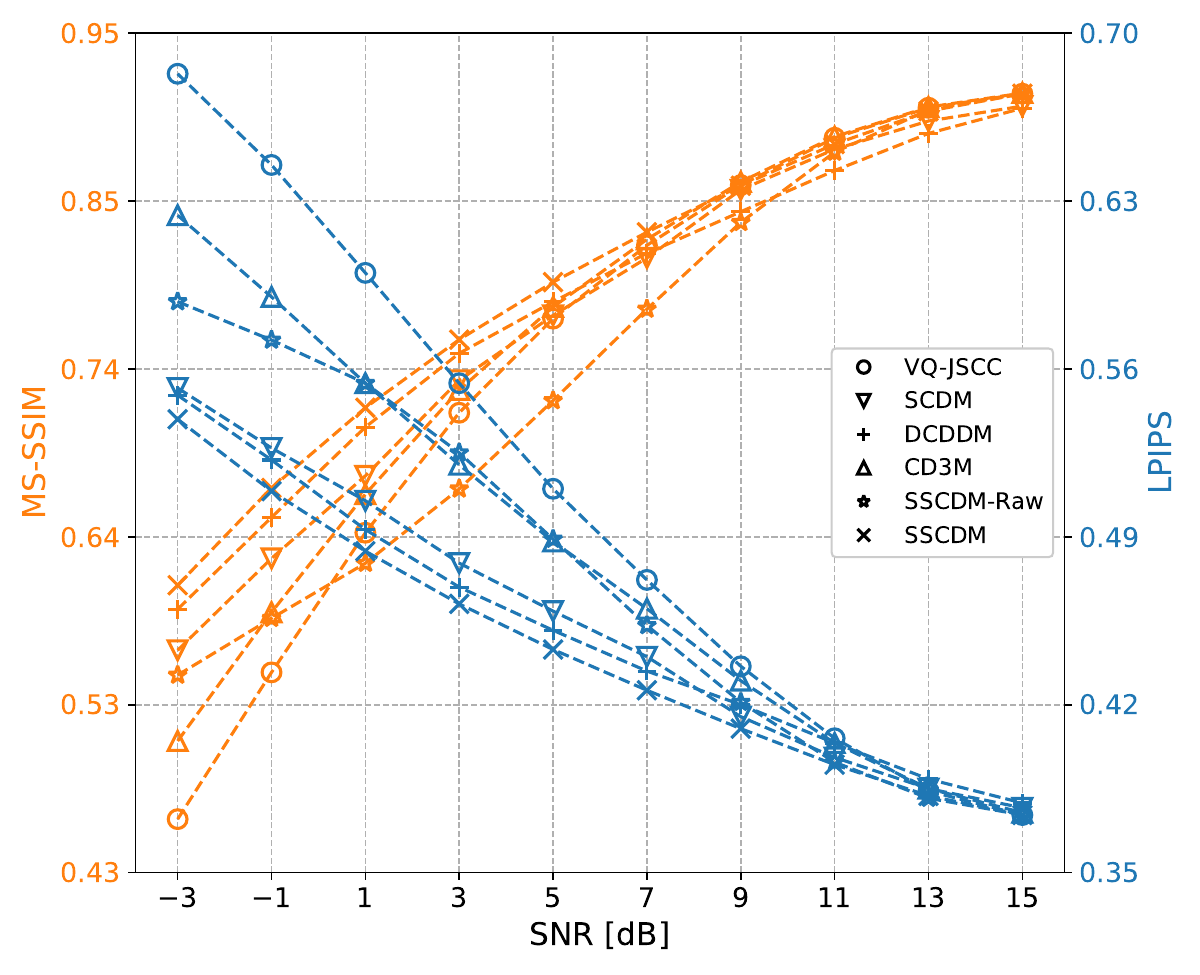}
    \label{fig:ffhq_result}}
    \subfigure[CelebA]{\includegraphics[width=0.32\linewidth]{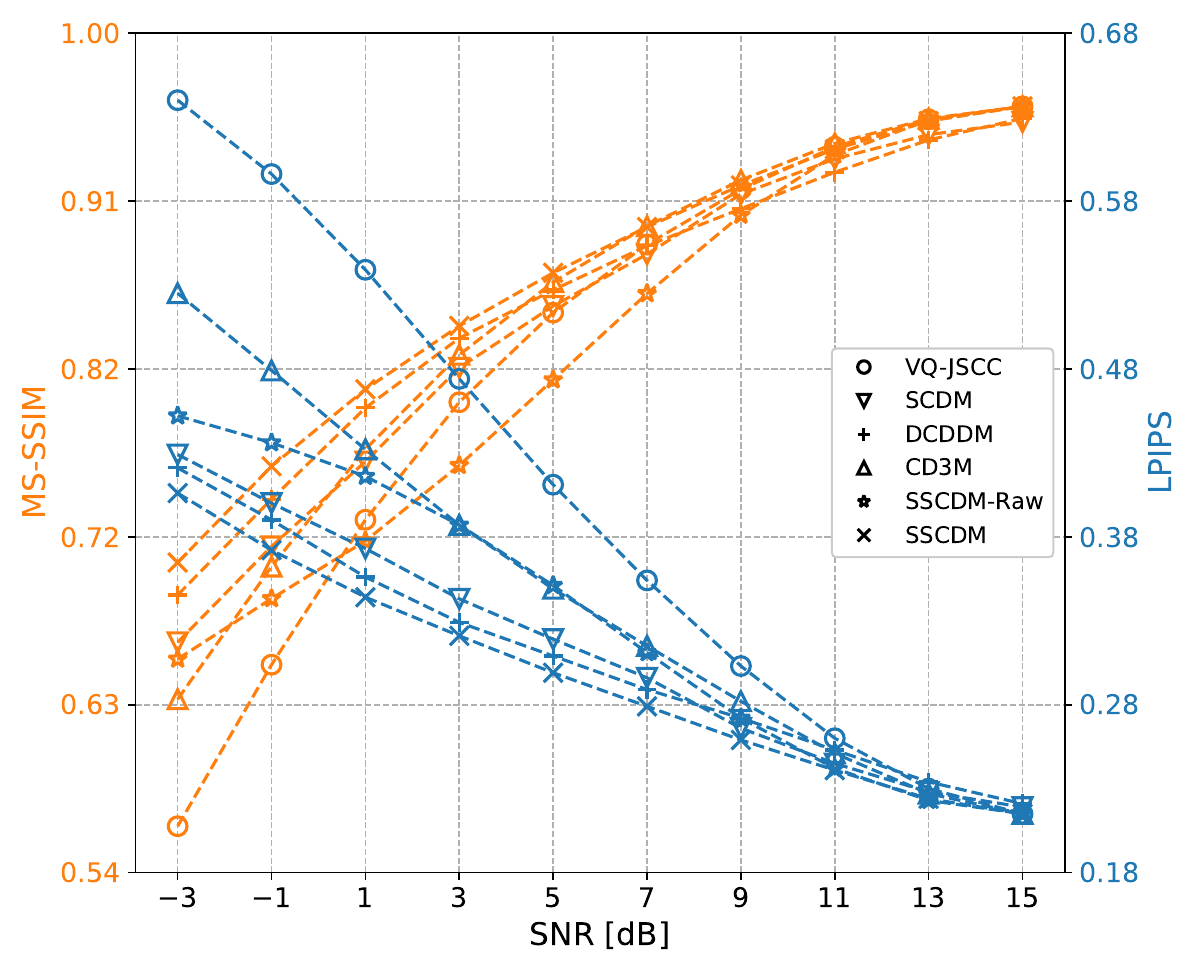}
    \label{fig:celeba_result}}
    \subfigure[CelebA-HQ]{\includegraphics[width=0.32\linewidth]{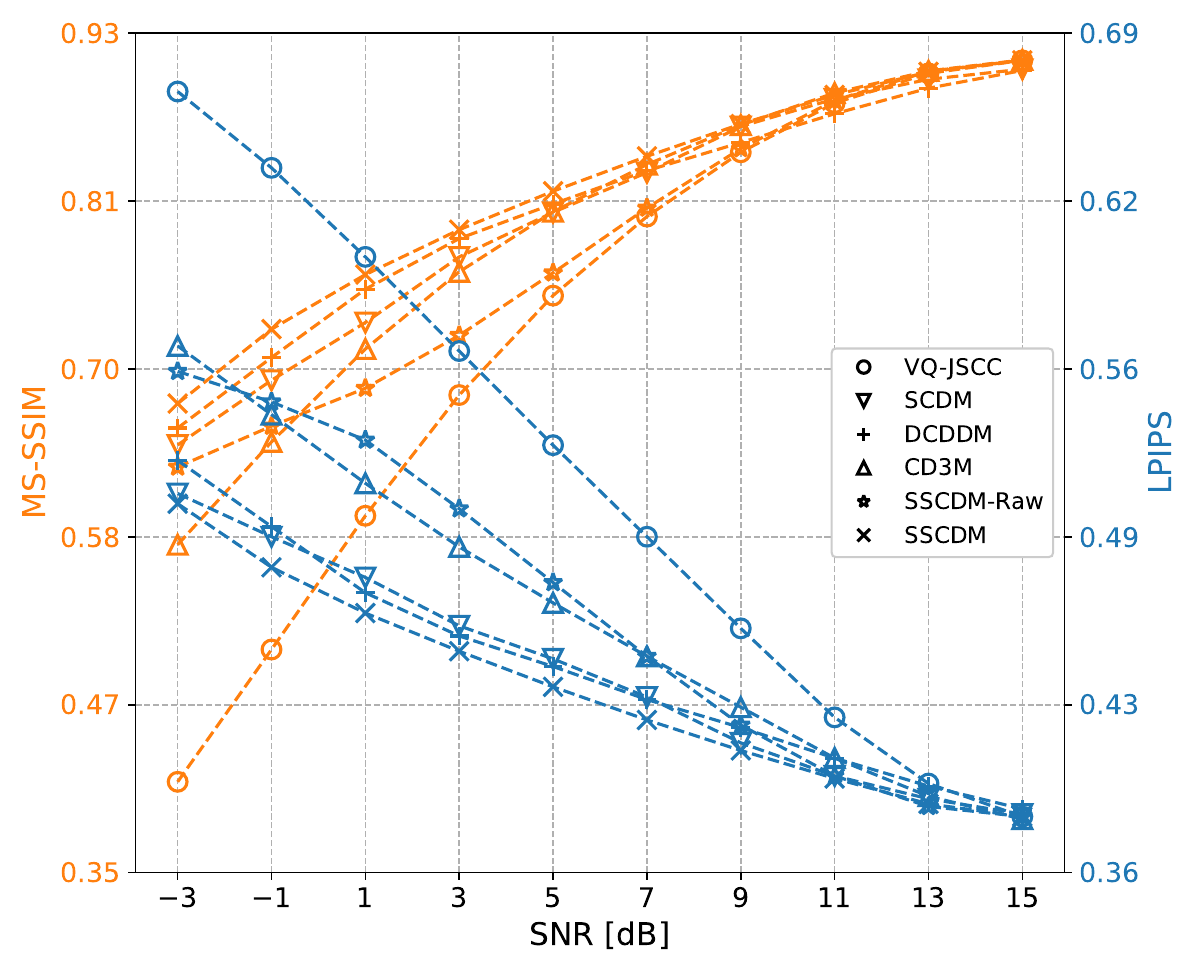}
    \label{fig:celebahq_result}}
    \caption{MS-SSIM$\uparrow$ (orange) and LPIPS$\downarrow$ (blue) comparison between SSCDM and baseline methods on (a) FFHQ, (b) CelebA, and (c) CelebA-HQ datasets.}
    \label{fig:comparison_results}
\end{figure*}

For the comparative analysis, we consider the following baselines. The JSCC encoder-decoder and the VQ codebook for all the baselines are trained using $\mathcal{L}_\mathsf{VQ\text{-}SOM}$ defined in \eqref{eq:total loss}.
\begin{itemize}
    \item \textbf{VQ-JSCC:} This method serves as the baseline model without the DM module. It is trained solely in the first stage of \textbf{Algorithm \ref{alg:training algorithm}}.

    \item \textbf{CD3M \cite{tang2025cd3m}:} This method applies D3PM \cite{austin2021structured} for symbol-level error correction in digital semantic communication. CD3M estimates per-step transition matrices independently by fitting a step-wise transition matrix between consecutive symbol detection matrices. The training configurations largely follow those reported in \cite{tang2025cd3m}. For a fair comparison, the scale factor of the noise scheduler is set to 0.9976 to match the target SNR range, with an identical model architecture to ours.

    \item \textbf{DCDDM \cite{he2025residual}:} This method also utilizes D3PM \cite{austin2021structured} for discrete DM to perform digital symbol correction. Each transition matrix includes a diagonal term representing correct transmission, while the remaining probability is allocated to other symbols in proportion to their pairwise symbol error probabilities. The overall model architecture and training settings are aligned with those of SSCDM. For the unspecified noise variance sequence in \cite{he2025residual}, we adopt a commonly used linear scheduling strategy.

    \item \textbf{SCDM \cite{mo2025scdm}:} This method adopts the score-based DM \cite{song2020score}, where continuous diffusion is applied to denoise AWGN-corrupted digital symbols before symbol detection. We follow the training procedure described in \cite{mo2025scdm}, while the size of DM in \cite{song2020score} is adjusted to match that of other methods for a fair comparison.

    \item \textbf{SSCDM-Raw:} This method is an ablation baseline of SSCDM with the same architecture and training configuration, but directly adopts the raw channel-induced transition matrices in \textbf{Proposition \ref{prop:transition matrix condition}} instead of the proposed Markov-enforced transition matrices.
\end{itemize}

\subsection{Transition Matrix Similarity}\label{subsec:transition matrix similarity}
Fig. \ref{fig:transition_matrix_heatmap} presents the heatmaps of 16-QAM transition matrices under varying SNR conditions. The proposed SSCDM successfully preserves the Markov property while accurately reflecting the characteristics of digital communication channel, as evidenced by the strong visual similarity between the ground-truth transition matrices (first row) and those by SSCDM (second row). In contrast, DCDDM (third row) exhibits clear discrepancies from the ground truth, mainly due to the absence of a proper Markov-chain formulation and inadequate modeling of channel dynamics.

To quantitatively assess the similarity between transition matrices, we compute the normalized mean square error (NMSE), i.e., $||\mathbf{Q}_{t_k|t_0} - \bar{\mathbf{Q}}_{t_k|t_0}||_F^2/|| \mathbf{Q}_{t_k|t_0}||_F^2$, where $\mathbf{Q}_{t_k|t_0}$ denotes the ground-truth transition matrix and $\bar{\mathbf{Q}}_{t_k|t_0}$ represents the obtained one from SSCDM. In Fig. \ref{fig:transition_matrix_nmse}, the NMSE of SSCDM converges to nearly zero across all SNR regions, confirming its ability to precisely capture the true channel transition behavior. In contrast, the approximated matrices from DCDDM exhibit noticeable errors relative to the ground truth. These results collectively validate the effectiveness of the proposed optimization method in \textbf{Algorithm \ref{alg:block coordinate descent algorithm}}. Consequently, SSCDM successfully models the underlying channel dynamics while preserving the Markov property, thereby enabling more stable and reliable diffusion-based correction.

\subsection{Comparison with Baselines}\label{subsec:comparison with baselines}

\begin{figure*}[!t]
    \centering
    \includegraphics[width=0.8\linewidth]{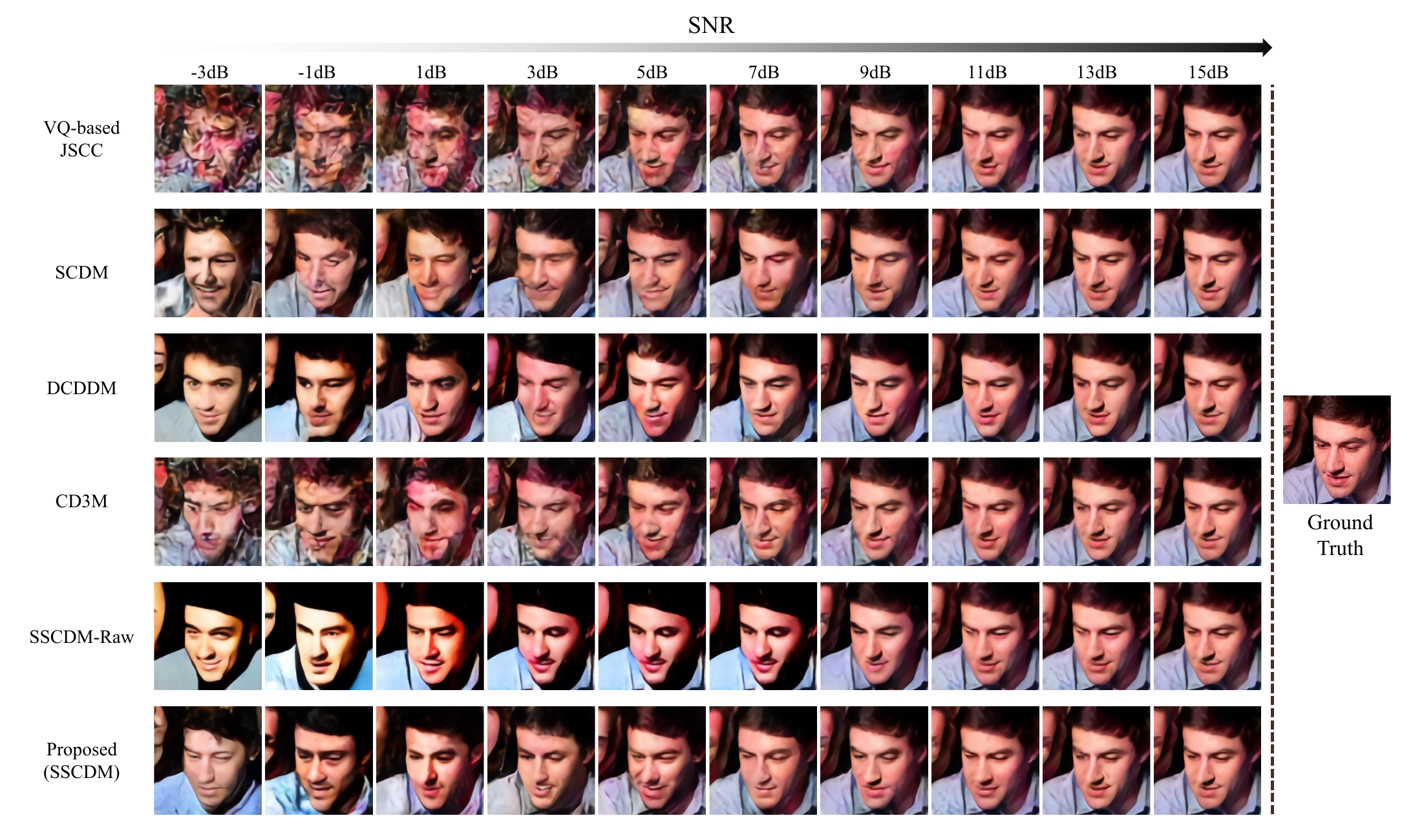}
    \caption{Reconstructed images from SSCDM and baseline methods with varying SNRs.}
    \label{fig:denoised_ffhq_img}
\end{figure*}

As shown in Fig. \ref{fig:ffhq_result}, our method consistently surpasses all baselines across the full SNR range, with particularly pronounced gains in low-SNR regions. While other DM-based schemes, except SSCDM-Raw, also enhance performance compared to VQ-JSCC, SSCDM delivers a more significant advantage by constructing transition matrices that both reflect the characteristics of digital communication channels and enforce the Markov property required by the DM. This is further supported by SSCDM-Raw, which directly uses the raw channel-induced transition matrices in \textbf{Proposition \ref{prop:transition matrix condition}} without addressing their non-Markovian nature. SSCDM-Raw exhibits degraded reconstruction performance, especially in the mid-SNR range, and even falls below VQ-JSCC in terms of MS-SSIM. As the SER drops sharply in high-SNR regions, both SCDM and DCDDM are slightly inferior to VQ-JSCC. This indicates that the continuous DM employed in SCDM is less suitable for AWGN denoising in digital communication, whereas the heuristic transition matrices of DCDDM fail to represent discrete diffusion precisely for symbol correction. In contrast, SSCDM maintains comparable performance to VQ-JSCC even at high SNRs, owing to our principled design of transition matrices. While CD3M attains competitive performance in high-SNR conditions, its gains in low-SNR regimes remain limited compared to other methods, primarily because it directly employs symbol detection matrices that are non-Markovian in typical digital communication.

Furthermore, as illustrated in Fig. \ref{fig:denoised_ffhq_img}, SSCDM delivers more stable and coherent reconstructions, whereas other schemes often exhibit degraded image quality with noticeable artifacts or structural distortions. Consequently, our method yields smoother and more perceptually faithful reconstructions in both quantitative and qualitative evaluations. To further validate the robustness of SSCDM across different datasets and image resolutions, we conduct additional experiments on the CelebA and CelebA-HQ datasets \cite{liu2015faceattributes, karras2018progressive}, consisting of $128\times128$ and $512\times512$ color images, respectively. For CelebA, the model configuration is adapted from the FFHQ experiments by setting $c_1 = 64$ and $c_2 = 128$, with both the encoder and decoder composed of four layers, resulting in a compression ratio of $\frac{N}{CHW} = \frac{1}{48}$. For CelebA-HQ, the FFHQ model configuration is retained. As depicted in Figs. \ref{fig:celeba_result} and \ref{fig:celebahq_result}, SSCDM exhibits performance trends similar to those observed on FFHQ, consistently outperforming other methods across all SNR levels. These results demonstrate that SSCDM maintains stable performance for different image resolutions.

\subsection{Performance under Various Communication Settings}\label{subsec:performance under different communication settings}

\begin{figure*}[!t]
    \centering
    \subfigure[Rayleigh fading]{\includegraphics[width=0.32\linewidth]{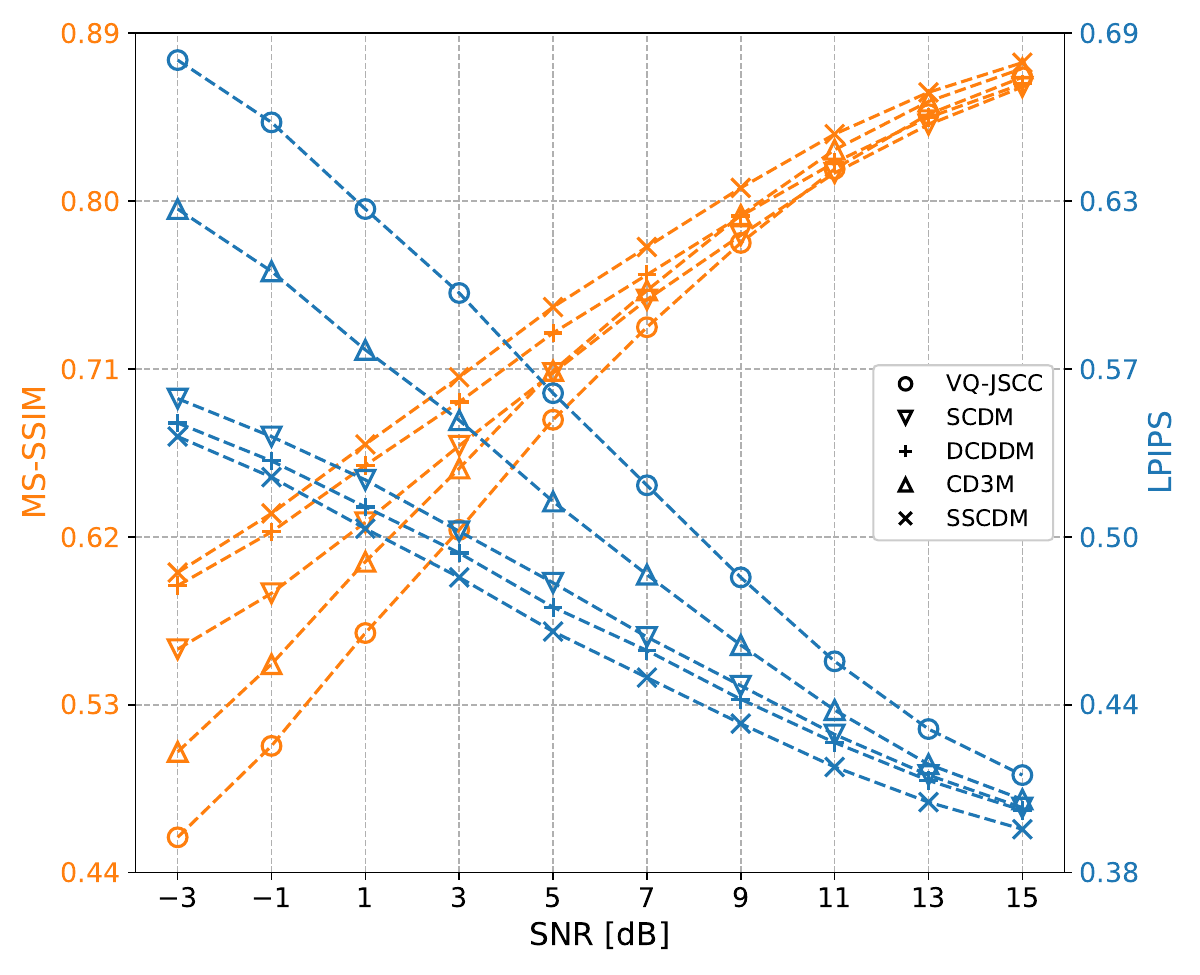}
    \label{fig:rayleigh}}
    \subfigure[64-QAM]{\includegraphics[width=0.32\linewidth]{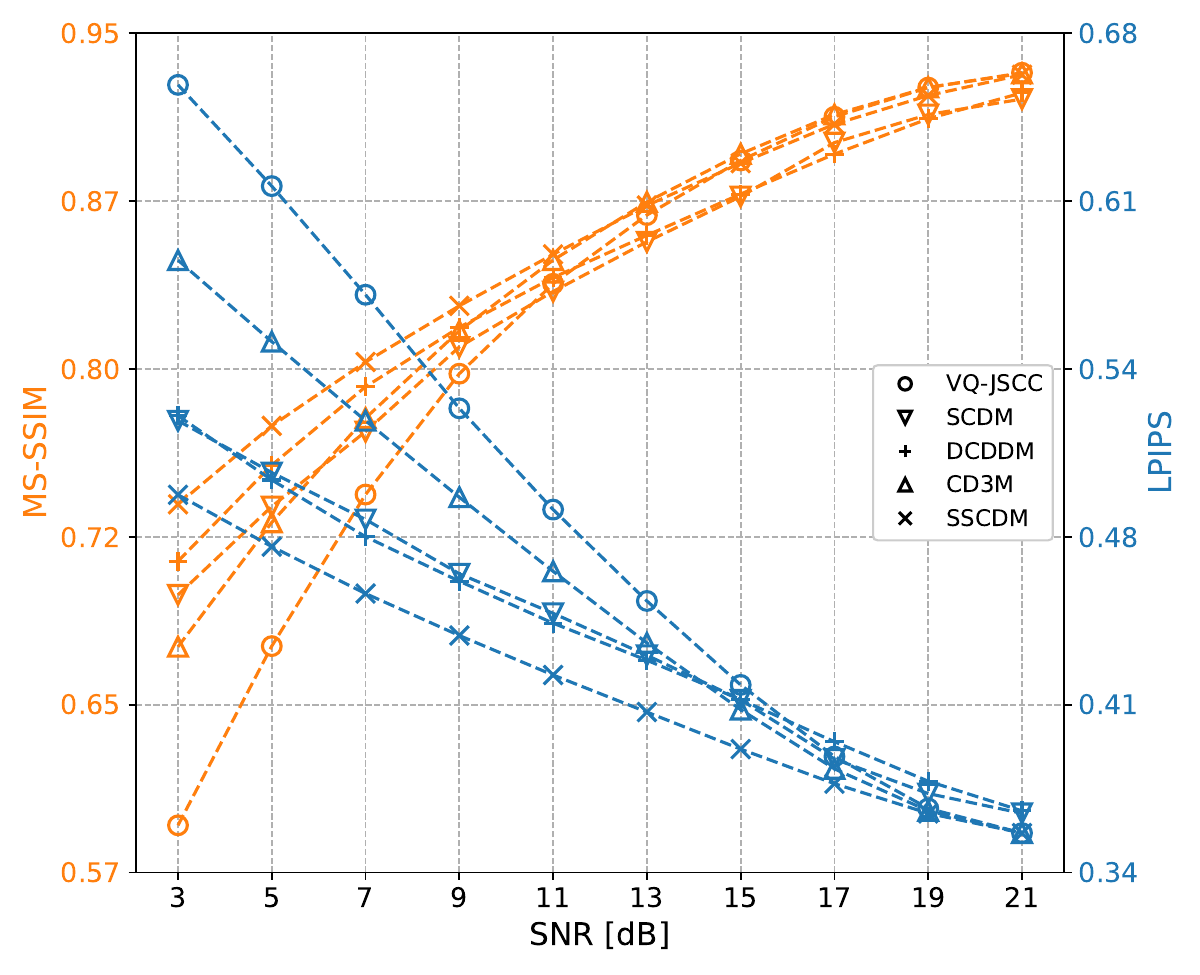}
    \label{fig:64-qam}}
    \subfigure[RDP]{\includegraphics[width=0.32\linewidth]{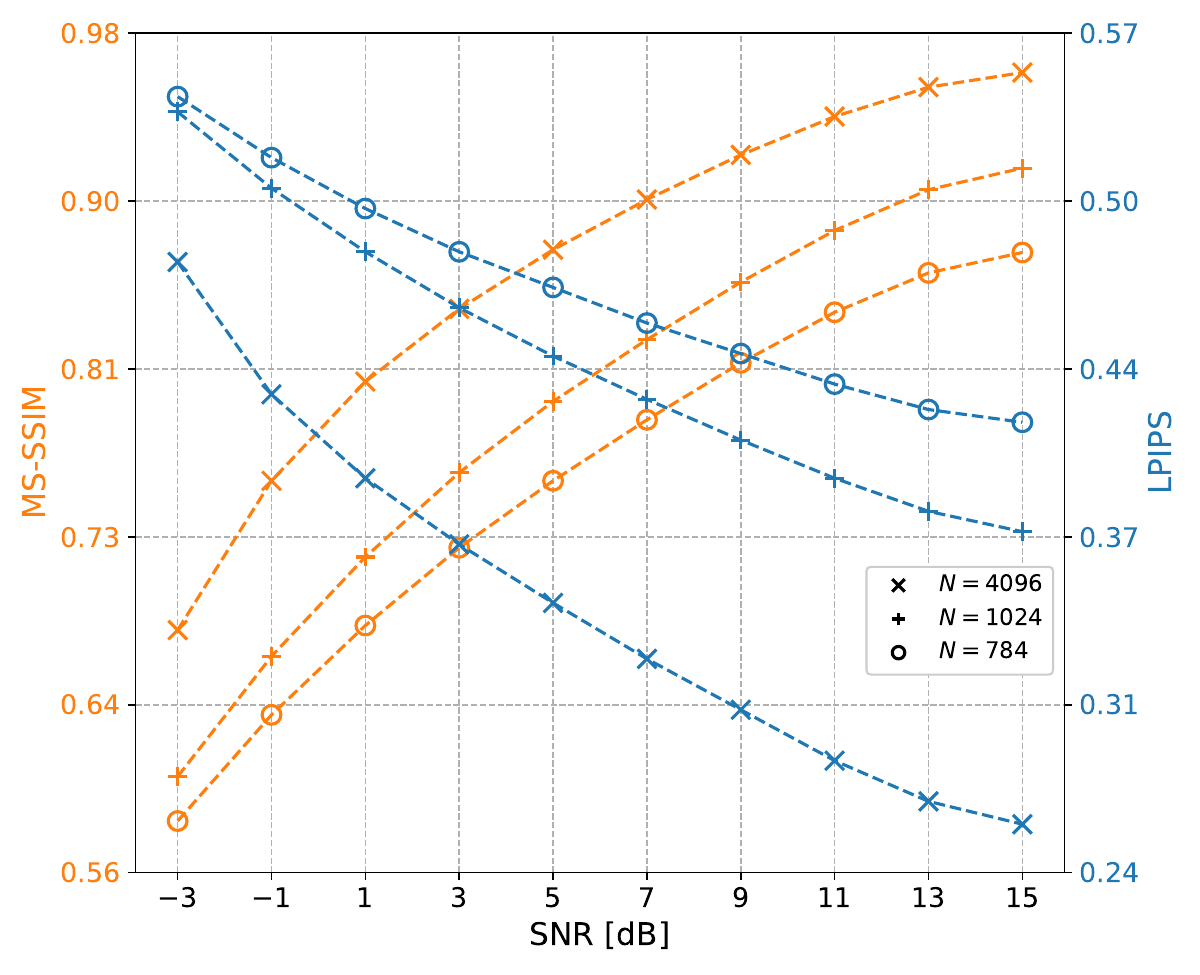}
    \label{fig:rdp}}
    \caption{MS-SSIM$\uparrow$ (orange) and LPIPS$\downarrow$ (blue) comparison between SSCDM and baseline methods (a) under Rayleigh fading channel, (b) using 64-QAM, and (c) varying compression ratios.}
    \label{fig:ablation_results}
\end{figure*}

Figs. \ref{fig:rayleigh} and \ref{fig:64-qam} evaluate the performance of SSCDM under Rayleigh fading and 64-QAM, respectively. For the Rayleigh fading channel, we assume that the receiver has perfect channel state information and applies matched filtering to equalize the received symbols before symbol detection. The x-axis represents the Tx SNR, $\mathbb{E}[|h|^2]P/\sigma^2$, where $h\sim\mathcal{CN}(0, 1)$ denotes the fading gain, while the diffusion starting point $k^*$ is selected according to the instantaneous Rx SNR, $|h|^2P/\sigma^2$, for each fading realization. For the 64-QAM setting, VQ-JSCC was trained at 25dB, and the DM-based correction model was trained to cover the dynamic SNR range from 3dB to 21dB. Additionally, we set $T' = 25$ for the subsequence $\{t_{k_\ell}\}_{\ell = 1}^{T'}$, with $k_\ell$ ranging from $4$ to $100$ at intervals of $4$, and change only $\lambda_2$ in \eqref{eq:bcd loss function} to $1$ while keeping the other hyperparameters unchanged. As shown in Figs. \ref{fig:rayleigh} and \ref{fig:64-qam}, SSCDM retains its performance advantage over the baseline methods under both settings. These results indicate that the proposed Markov-enforced symbol correction remains effective across different channel and modulation conditions. To further assess the robustness of SSCDM to channel estimation error, we model the estimated fading gain as $\hat{h} = h + n_h$, where $n_h\sim\mathcal{CN}(0,\sigma_h^2)$. $\hat{h}$ is used for both equalization and the selection of $k^*$. Table \ref{tab:channel estimation errors} reports the reconstruction performance for $\sigma_h\in\{0, 0.05, 0.1\}$. As $\sigma_h$ increases, the performance gradually degrades, while the degradation remains moderate.

\begin{table}[!t]
\centering
\caption{Performance of SSCDM under Rayleigh fading channel with different channel estimation errors.}
\label{tab:channel estimation errors}
\begin{tabular}{c|c|cccc}
\hline
\multirow{2}{*}{\textbf{Metric}} 
& \multirow{2}{*}{\textbf{$\sigma_{h}$}} 
& \multicolumn{4}{c}{SNR [dB]} \\
\cline{3-6}
& & $-3$ & $3$ & $9$ & $15$ \\
\hline
\multirow{3}{*}{\textbf{MS-SSIM}$\uparrow$}
& $0$ & 0.6008 & 0.7055 & 0.8070 & 0.8741 \\
& $0.05$ & 0.5995 & 0.7044 & 0.8044 & 0.8677 \\
& $0.1$ & 0.5938 & 0.6950 & 0.7860 & 0.8444 \\
\hline
\multirow{3}{*}{\textbf{LPIPS}$\downarrow$}
& $0$ & 0.5410 & 0.4890 & 0.4351 & 0.3964 \\
& $0.05$ & 0.5432 & 0.4918 & 0.4398 & 0.4046 \\
& $0.1$ & 0.5493 & 0.5036 & 0.4586 & 0.4260 \\
\hline
\end{tabular}
\end{table}

Fig. \ref{fig:rdp} exhibits the effect of varying the compression ratios in SSCDM by changing the number of transmitted feature vectors $N\in\{784, 1024, 4096\}$. Since $N$ corresponds to the number of transmitted symbols per image, the effective transmission rate is defined as $R_{\mathrm{eff}} = \frac{N}{CHW}$ channel uses per source dimension, consistent with the compression ratio adopted in our system. Across all SNR levels, using a smaller $N$ decreases MS-SSIM (a distortion-oriented metric) and increases LPIPS (a perception-oriented metric) due to increased information loss. This rate-dependent behavior provides a rate-distortion-perception (RDP)-oriented interpretation, where reducing the effective transmission rate can degrade both distortion- and perception-oriented performance, though not a full characterization of the RDP tradeoff. This observation mainly comes from the considered discrete DM-based semantic communication system for image reconstruction, rather than from the proposed Markov-enforced transition design itself. Therefore, the integration of SSCDM into the considered image-transmitting semantic communication system preserves the expected RDP behavior, while the core objective of the proposed Markov-enforced transition design is to improve digital symbol correction performance.

\subsection{Computational Complexity}\label{subsec:computational complexity}

\begin{table}[!t]
\centering
\caption{Inference FLOPs, latency, and effective NFEs.}
\label{tab:inference complexity}
\begin{tabular}{c|c|ccc}
\hline
\multirow{2}{*}{\textbf{Metric}} 
& \multirow{2}{*}{\textbf{Method}} 
& \multicolumn{3}{c}{SNR [dB]} \\
\cline{3-5}
& & $-3$ & $3$ & $9$ \\
\hline
\multirow{4}{*}{\textbf{GFLOPs}}
& SCDM  & 519.35  & 448.74  & 378.13 \\
& DCDDM & 360.13  & 280.84  & 222.94 \\
& CD3M  & 3102.83 & 1786.31 & 917.77 \\
& SSCDM & 360.44  & 311.62  & 116.35 \\
\hline
\multirow{4}{*}{\textbf{Latency [ms]}}
& SCDM  & 119.01 & 100.76 & 83.56 \\
& DCDDM & 71.08  & 51.254  & 38.13 \\
& CD3M  & 610.90 & 402.08 & 172.96 \\
& SSCDM & 69.87  & 54.18  & 12.52 \\
\hline
\multirow{4}{*}{\textbf{Effective NFEs}}
& SCDM  & 138  & 117 & 96 \\
& DCDDM & 100  & 74  & 55 \\
& CD3M  & 1000 & 568 & 283 \\
& SSCDM & 100  & 84  & 20 \\
\hline
\end{tabular}
\end{table}

\begin{table}[!t]
\centering
\caption{Training FLOPs and time.}
\label{tab:training complexity}
\begin{tabular}{c|cccc}
\hline
\textbf{Metric} 
& SCDM 
& DCDDM 
& CD3M 
& SSCDM \\
\hline
\textbf{GFLOPs}
& 3.3625 & 3.0475 & 3.0475 & 3.0511 \\
\textbf{Time [ms]}
& 136.61 & 80.49 & 81.10 & 77.28 \\
\hline
\end{tabular}
\end{table}

Table \ref{tab:inference complexity} reports the inference FLOPs, inference latency, and effective number of function evaluations (NFEs) per image during DM-based symbol correction. For a fair comparison, all methods are evaluated on an NVIDIA RTX 3090 GPU, and each benchmark method uses the sampling-step specification reported in its original paper for best reconstruction performance. CD3M exhibits high inference complexity due to its large number of diffusion steps. SCDM also incurs high inference latency because of the predictor-corrector sampling procedure of the continuous DM. DCDDM requires fewer NFEs than CD3M and SCDM, but its linear scheduling still leads to relatively large NFEs in the high-SNR regime. In contrast, SSCDM achieves comparable or lower inference complexity than the baselines while preserving the superior reconstruction performance shown in Fig. \ref{fig:comparison_results}.

Table \ref{tab:training complexity} further compares the training FLOPs and time per image, corresponding to a single diffusion step. SCDM requires a longer training time because it employs the more complex NCSN++ architecture \cite{song2020score}. Since DCDDM, CD3M, and SSCDM use the same discrete DM architecture, they exhibit similar training complexity. This confirms that the performance gain of SSCDM mainly comes from the proposed Markov-enforced transition design rather than from additional training complexity.

\begin{table}[!t]
\centering
\caption{Performance of SSCDM using different maximum time steps $T$.}
\label{tab:diffusion time steps}
\begin{tabular}{c|c|cccc}
\hline
\multirow{2}{*}{\textbf{Metric}} 
& \multirow{2}{*}{\textbf{$T$}} 
& \multicolumn{4}{c}{SNR [dB]} \\
\cline{3-6}
& & $-3$ & $3$ & $9$ & $15$ \\
\hline
\multirow{3}{*}{\textbf{MS-SSIM}$\uparrow$}
& $100$ & 0.6080 & 0.7601 & 0.8554 & 0.9124 \\
& $200$ & 0.6010 & 0.7596 & 0.8553 & 0.9117 \\
& $500$ & 0.5903 & 0.7618 & 0.8568 & 0.9113 \\
\hline
\multirow{3}{*}{\textbf{LPIPS}$\downarrow$}
& $100$ & 0.5389 & 0.4619 & 0.4102 & 0.3742 \\
& $200$ & 0.5432 & 0.4627 & 0.4104 & 0.3739 \\
& $500$ & 0.5510 & 0.4627 & 0.4101 & 0.3741 \\
\hline
\end{tabular}
\end{table}

We also investigate the effect of the maximum time step $T$, as shown in Table \ref{tab:diffusion time steps}. Increasing $T$ from 100 to 200 or 500 does not provide consistent gains in either MS-SSIM or LPIPS, indicating that $T = 100$ is sufficient for effective symbol correction in most SNR regimes. Therefore, we set $T = 100$ in all experiments as a balanced choice between reconstruction performance and inference latency.

\subsection{Impact of Codebook-Guided U-Net Embedding}\label{subsec:impact of codebook-guided u-net embedding}

\begin{figure}[!t]
    \centering
    \includegraphics[width=0.8\linewidth]{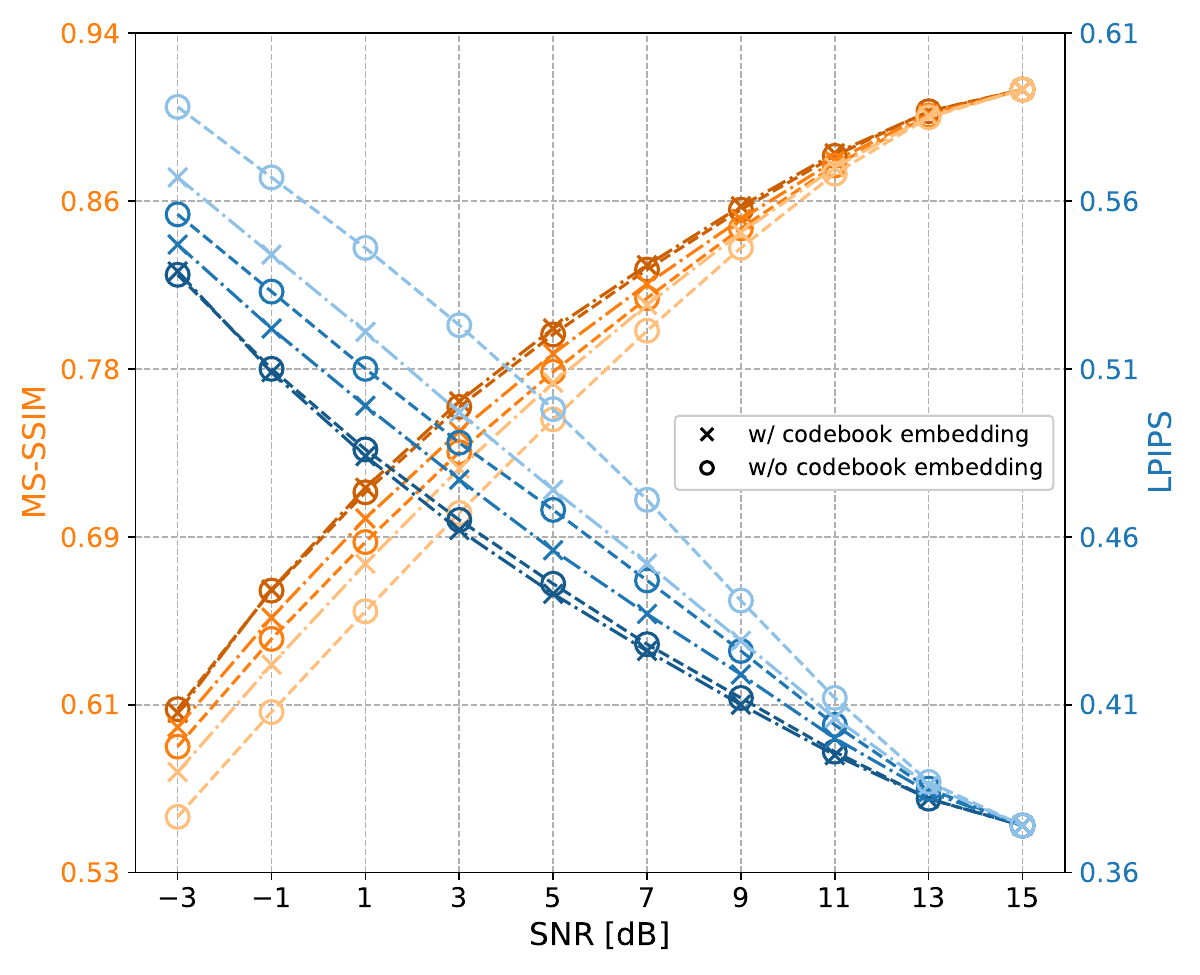}
    \caption{MS-SSIM$\uparrow$ (orange) and LPIPS$\downarrow$ (blue) comparison between SSCDM with and without codebook embedding across different base tensor channel dimensions $\kappa$ of the U-Net. Darker colors are larger $\kappa$ (16 $\rightarrow$ 32 $\rightarrow$ 64).}
    \label{fig:ffhq_codeword_semantics}
\end{figure}

As described in Section \ref{subsec:semantics-informed diffusion model architecture}, we incorporate the codebook embedding into the U-Net input to strengthen the integration of the DM within the overall JSCC framework. By aligning the diffusion process with the latent semantic space of the JSCC decoder, this design enables the DM to more effectively capture semantic relationships among codewords. Fig. \ref{fig:ffhq_codeword_semantics} clearly shows that the embedding-based approach consistently outperforms the one without embedding. The performance gap becomes more pronounced as $\kappa$ decreases, i.e., the model size is reduced, indicating a limitation in representational capacity. By injecting codebook-level semantic structure, the proposed embedding enables the DM to exploit richer semantic information and mitigates performance degradation in compact models. As a result, the DM achieves more stable and semantically coherent symbol correction with only a negligible parameter overhead of $1.8$K.

\subsection{Effect of SOM-Based Codebook Learning}\label{subsec:effect of som-based codebook learning}

\begin{figure}[!t]
    \centering
    \includegraphics[width=0.9\linewidth]{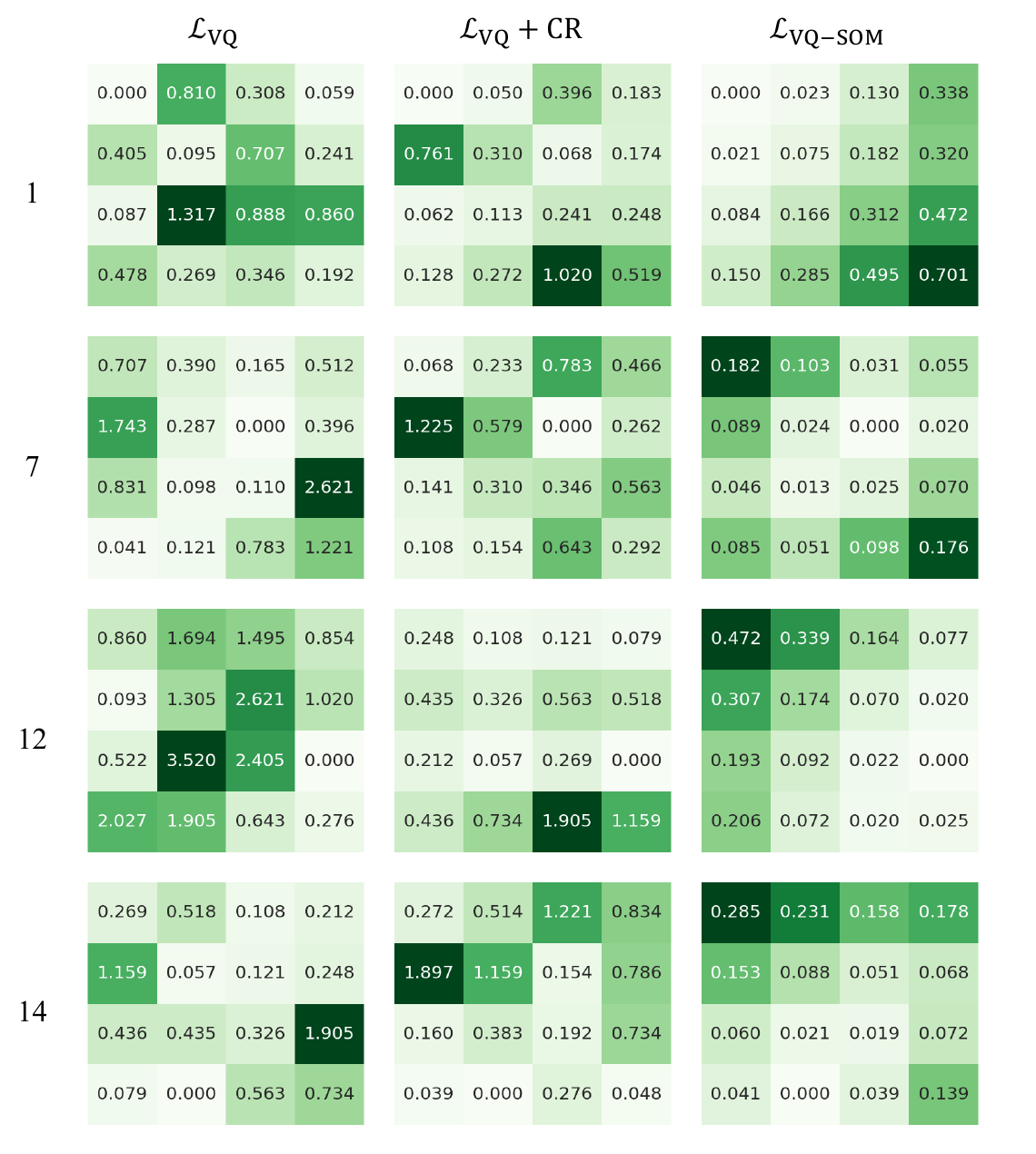}
    \caption{Heatmaps of Euclidean distances between each codeword constructed by using $\mathcal{L}_\mathsf{VQ}$, $\mathcal{L}_\mathsf{VQ}$ + CR, and $\mathcal{L}_\mathsf{VQ\text{-}SOM}$.}
    \label{fig:codeword_similarity}
\end{figure}

Fig. \ref{fig:codeword_similarity} illustrates the Euclidean distances between the learned codewords and four randomly selected reference codewords, i.e., $\mathbf{c}_1$, $\mathbf{c}_7$, $\mathbf{c}_{12}$, and $\mathbf{c}_{14}$. When trained solely with $\mathcal{L}_\mathsf{VQ}$, no mechanism enforces local consistency in the feature space. Consequently, the distance pattern does not exhibit a smoothly increasing transition from the reference codewords. The codebook reordering (CR) algorithm \cite{zhou2024moc} reorders codewords by greedily finding the minimum distance codeword and then applies Gray coding after the codebook optimization with $\mathcal{L}_\mathsf{VQ}$. Although the CR approach attempts to enhance the locality, it still fails to preserve meaningful spatial relationships between codewords. This is mainly because this method is heuristic and lacks a learning-based adaptation. In contrast, our SOM-based method trained with $\mathcal{L}_{\mathsf{VQ\text{-}SOM}}$ successfully establishes a smooth and topology-preserving structure in the codebook. The heatmaps clearly demonstrate this property through gradual color transitions, indicating progressively increasing distances from the reference codewords.

\begin{figure}[!t]
    \centering
    \includegraphics[width=0.8\linewidth]{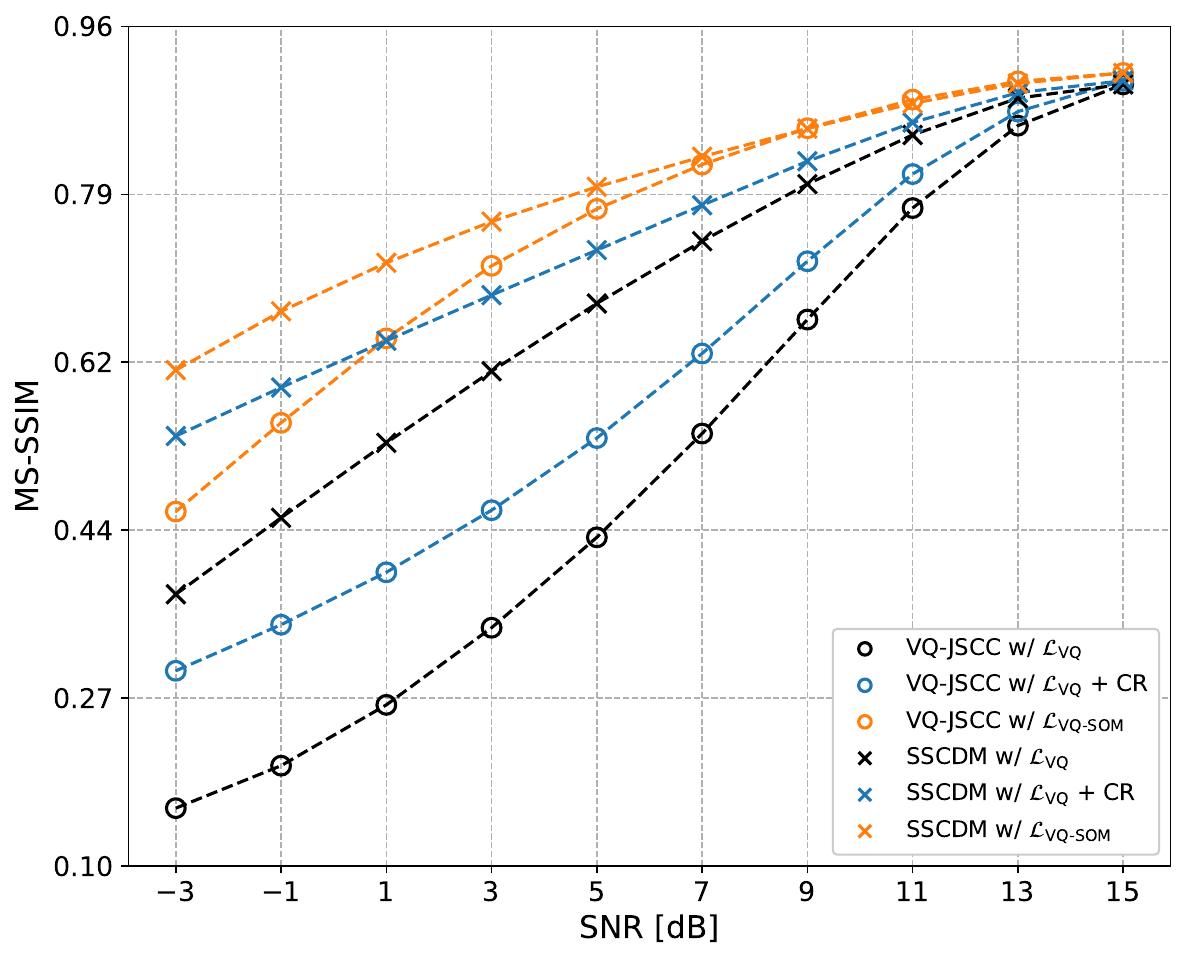}
    \caption{MS-SSIM$\uparrow$ comparison between VQ codebook construction methods without and with symbol error correction by SSCDM.}
    \label{fig:ffhq_som_cr_naive}
\end{figure}

The Euclidean distance property illustrated in Fig. \ref{fig:codeword_similarity} directly translates into the image reconstruction performance shown in Fig. \ref{fig:ffhq_som_cr_naive}. Specifically, the SOM-aided SSCDM achieves remarkably better performance than SSCDM with $\mathcal{L}_\mathsf{VQ}$ or combined with the CR algorithm. In the two baseline variants using SSCDM with $\mathcal{L}_\mathsf{VQ}$, the VQ codebook fails to maintain local consistency in the constellation space, preventing SSCDM from effectively linking its error correction capability to downstream task performance. In contrast, our SOM-based learning produces a topology-preserving codebook, enabling SSCDM’s error correction process to more effectively enhance image reconstruction performance compared to VQ-JSCC with $\mathcal{L}_{\mathsf{VQ\text{-}SOM}}$. Moreover, performance improvements are consistently observed across all codebook construction methods when integrated with SSCDM, confirming the effective error-correcting capability of the proposed SSCDM.

\section{Conclusion}\label{sec:conclusion}
This work introduced SSCDM, a novel semantic symbol correcting DM tailored for VQ-based digital semantic communication under a discrete-time diffusion framework. By enforcing Markov-consistent transition dynamics for discrete symbols using a shared-eigenbasis CTMC solution structure and embedding them into a semantic latent space, our approach transforms the inherently non-Markovian symbol transitions induced by realistic digital communication channels into a tractable form, enabling principled and robust symbol correction. Moreover, a SOM-based regularization is incorporated into the codebook learning to preserve the geometric vicinity of neighboring symbols, further enhancing correction reliability and reconstruction fidelity. Comprehensive experiments under various channel conditions and datasets confirm that SSCDM outperforms existing baselines, particularly in low-SNR regimes. We believe this work advances the integration of discrete DMs into digital semantic communication. Future research could extend the correction framework to point-to-point multi-modal settings, where discrete DMs jointly process text, images, videos, and cross-modal representations for symbol error correction. Another direction is multiple access scenarios, where symbol errors are jointly corrected across users under interference channels, further broadening the applicability of discrete DM-based correction.

\appendices
\section{Proof of Theorem \ref{thm:Markov is impossible}}\label{apdx:Markov is impossible}
We prove the theorem by comparing the two expressions $[\mathbf{Q}_{t_k|t_0}]_{i, j}$ and $[\mathbf{Q}_{t_{k - 1}|t_0}\mathbf{Q}_{t_k|t_{k - 1}}]_{i, j}$, both of which describe the transition probability from a single symbol point $\mathbf{s}_i$ at time point $t_0$ to region $\mathbb{A}_j$ at time point $t_k$. Showing that these two probabilities are generally unequal is equivalent to showing that the Chapman-Kolmogorov equation is violated.

By the definition of cumulative variance $\bar{\sigma}_{t_k}^2 = \bar{\sigma}_{t_{k - 1}}^2 + \sigma_{t_k}^2$, the direct transition $\mathbf{Q}_{t_k|t_0}$ is given by
\begin{align}\label{eq:ckleft}
    \nonumber[\mathbf{Q}_{t_k|t_0}]_{i, j} &= \iint_{\mathbb{A}_j}\frac{1}{\pi\bar{\sigma}_{t_k}^2}e^{-\frac{||\mathbf{x} - \mathbf{s}_i||_2^2}{\bar{\sigma}_{t_k}^2}}d\mathbf{x}\\
    &=\sum_{u = 1}^M\iint_{\mathbb{A}_u}\frac{1}{\pi\bar{\sigma}_{t_{k - 1}}^2}e^{-\frac{||\mathbf{x}' - \mathbf{s}_i||_2^2}{\bar{\sigma}_{t_{k - 1}}^2}}\\
    &\nonumber\quad\quad\quad\quad\times\iint_{\mathbb{A}_j}\frac{1}{\pi\sigma_{t_k}^2}e^{-\frac{||\mathbf{x} - \mathbf{x}'||_2^2}{\sigma_{t_k}^2}}d\mathbf{x}d\mathbf{x}'.
\end{align}
The second equality reflects a sequential corruption process, where two independent Gaussian noises $n_1 \sim \mathcal{CN}(0, \bar{\sigma}_{t_{k - 1}}^2)$ and $n_2 \sim \mathcal{CN}(0, \sigma_k^2)$ are successively applied to the symbol $\mathbf{s}_i$. Each term in the summation over $u\in[1{:}M]$ represents a path where the first noise perturbs $\mathbf{s}_i$ to a specific point $\mathbf{x}'\in\mathbb{A}_u$, and the second noise moves this exactly same point $\mathbf{x}'$ into the region $\mathbb{A}_j$. In other words, the product of probabilities is taken for the same intermediate point $\mathbf{x}'$ before integration.

On the other hand, the matrix product yields
\begin{align}\label{eq:ckright}
    &\nonumber[\mathbf{Q}_{t_{k - 1}|t_0}\mathbf{Q}_{t_k|t_{k - 1}}]_{i, j} = \sum_{u = 1}^M\iint_{\mathbb{A}_u}\frac{1}{\pi\bar{\sigma}_{t_{k - 1}}^2}e^{-\frac{||\mathbf{x} - \mathbf{s}_i||_2^2}{\bar{\sigma}_{t_{k - 1}}^2}}d\mathbf{x}\\
    &\quad\times\iint_{\mathbb{A}_u}\frac{a_{t_{k - 1}}(\mathbf{x}'')}{b_{t_{k - 1}}(u)}\iint_{\mathbb{A}_j}\frac{1}{\pi\sigma_{t_k}^2}e^{-\frac{||\mathbf{x}' - \mathbf{x}''||_2^2}{\sigma_{t_k}^2}}d\mathbf{x}'d\mathbf{x}''.
\end{align}
In this case, the integral over the intermediate region $\mathbb{A}_u$ effectively includes contributions from products of probabilities corresponding to different points $\mathbf{x}$ and $\mathbf{x}''$ within the same region. That is, transitions from $\mathbf{s}_i$ at $t_0$ to $\mathbf{x}\in\mathbb{A}_u$ at $t_{k - 1}$ combined with $\mathbf{x}''\in\mathbb{A}_u$ at $t_{k - 1}$ to $\mathbf{x}'\in\mathbb{A}_j$ at $t_k$ for $\mathbf{x} \neq \mathbf{x}''$ also enter the calculation.

Consequently, the two expressions are mathematically and fundamentally distinct, since the difference in their underlying $\sigma$-algebras directly causes them to integrate over \textit{different sets of events}. The true probability in \eqref{eq:ckleft} is conditioned on the fine-grained, point-refined $\sigma$-algebra $\mathcal{F}_{t_{k-1}} = \sigma(X_{t_{k-1}})$. Conditioning on $\mathcal{F}$ necessitates computing \textit{an integral of a product}, thereby correctly accounting only for paths that pass through the \textit{same intermediate point} $\mathbf{x}'$, conceptually expressed as
\begin{align}
    \nonumber(\mathbf{s}_i\to\mathbb{A}_u\to\mathbb{A}_j) &= \sum\nolimits_{\mathbf{x}'\in\mathbb{A}_u}(\mathbf{s}_i\to\mathbf{x}'\to\mathbb{A}_j)\\
    &= \sum\nolimits_{\mathbf{x}'\in\mathbb{A}_u}(\mathbf{s}_i\to\mathbf{x}')\times(\mathbf{x}'\to\mathbb{A}_j).
\end{align}
In contrast, the Chapman-Kolmogorov composition in \eqref{eq:ckright} forces conditioning on the coarse, region-based $\sigma$-algebra $\mathcal{G}_{t_{k-1}} = \sigma(\{X_{t_{k-1}}\in\mathbb{A}_u\})$. This conditioning on $\mathcal{G}$, i.e., a strictly coarser sub-$\sigma$-algebra of $\mathcal{F}$, compels the calculation into the form of \textit{a product of integrals}. This structure inherently includes not only paths through the \textit{same intermediate point}, but also mixed paths having \textit{different arrival and departure points} $\mathbf{x} \neq \mathbf{x}''$ at $t_{k - 1}$, informally written as
\begin{align}
    \nonumber(\mathbf{s}_i\to\mathbb{A}_u\to\mathbb{A}_j) &= \sum_{\mathbf{x}\in\mathbb{A}_u}(\mathbf{s}_i\to\mathbf{x})\times\sum_{\mathbf{x}''\in\mathbb{A}_u}(\mathbf{x}''\to\mathbb{A}_j)\\
    &= \sum_{\mathbf{x}, \mathbf{x}''\in\mathbb{A}_u}(\mathbf{s}_i\to\mathbf{x})\times(\mathbf{x}''\to\mathbb{A}_j).
\end{align}
Intuitively, this discrepancy arises because the point-level information at $t_{k - 1}$, which is required to determine the next transition and preserve the Markov property, is no longer available once the intermediate state is represented only by the region-level information $\{X_{t_{k-1}}\in\mathbb{A}_u\}$. The same reasoning applies to all intermediate regions $\{\mathbb{A}_u\}_{u = 1}^M$.

The two probabilities in \eqref{eq:ckleft} and \eqref{eq:ckright} are defined over \textit{non-identical event descriptions}, and the relevant probability integrands are non-constant over decision regions with nonzero measure for typical digital modulation schemes under nonzero Gaussian noise. Hence, the point-conditioned composition in \eqref{eq:ckleft} does not coincide with the region-conditioned composition in \eqref{eq:ckright}, i.e., $[\mathbf{Q}_{t_k|t_0}]_{i, j} \neq [\mathbf{Q}_{t_{k - 1}|t_0}\mathbf{Q}_{t_k|t_{k - 1}}]_{i, j}$.\footnote{Specially constructed nonstandard decision regions or constant transition probability functions within each region may lead to equality. However, such cases are atypical and fall outside the digital communication setting considered in this work.} This mismatch is also numerically verified by MC simulations in \textbf{Remark \ref{rm:Markov is impossible}}. Therefore, the Chapman-Kolmogorov equation is violated, establishing the non-Markovian nature of the detection-induced symbol transition process.

\section{Proof of Lemma \ref{lm:non-increasing diagonal}}\label{apdx:non-increasing diagonal}
To investigate the sign of the eigenvalues of the rate matrix, we introduce a matrix $\mathbf{P} = \frac{1}{\rho}\overrightarrow{\mathbf{R}_t} + \mathbf{I}_M\in\mathbb{R}^{M\times M}$, where $\rho$ is a positive scalar satisfying $\rho > \max_i |[\overrightarrow{\mathbf{R}_t}]_{i, i}|$. Under this choice, $\mathbf{P}$ is row-stochastic. Since $\frac{1}{\rho}\overrightarrow{\mathbf{R}_{t}}\in\mathbb{R}^{M\times M}$ has non-positive diagonal entries strictly greater than $-1$ and non-negative off-diagonal entries according to \textbf{Proposition \ref{prop:rate matrix properties}}, all entries of $\mathbf{P}$ are non-negative. Also, as $\overrightarrow{\mathbf{R}_t}$ has the zero row-sum property from \textbf{Proposition \ref{prop:rate matrix properties}}, i.e., $\sum_{j = 1}^M [\overrightarrow{\mathbf{R}_t}]_{i, j} = 0$ for $i\in[1{:}M]$, $\mathbf{P}$ is row-stochastic with satisfying $\sum_{j = 1}^M [\mathbf{P}]_{i, j} = 1$ for $i\in[1{:}M]$. Then, since the rate matrix can be eigen-decomposed into $\overrightarrow{\mathbf{R}_t} = \mathbf{V}\mathbf{\Sigma}(t)\mathbf{V}^{-1}$ as in \eqref{eq:ctmc rate matrix}, we can express $\mathbf{P}$ as
\begin{align}
    \mathbf{P} = \frac{1}{\rho}\overrightarrow{\mathbf{R}_{t}} + \mathbf{I}_M = \mathbf{V}(\frac{1}{\rho}\mathbf{\Sigma}(t) + \mathbf{I}_M)\mathbf{V}^{-1} = \mathbf{V}\mathbf{\Lambda}\mathbf{V}^{-1},
\end{align}
where $\mathbf{\Lambda} = \frac{1}{\rho}\mathbf{\Sigma}(t) + \mathbf{I}_M\in\mathbb{R}^{M\times M}$.

\begin{lemma}\label{lm:maximum eigenvalue}
    The largest eigenvalue of any row-stochastic matrix is $1$.
\begin{IEEEproof}
    See \emph{Appendix \ref{apdx:maximum eigenvalue}}.
\end{IEEEproof}
\end{lemma}

Since $\mathbf{P}$ is row-stochastic, \textbf{Lemma \ref{lm:maximum eigenvalue}} implies that $[\mathbf{\Lambda}]_{i, i} \leq 1$ for $i\in[1{:}M]$. Therefore, because $\mathbf{\Lambda} = \frac{1}{\rho}\mathbf{\Sigma}(t) + \mathbf{I}_M$, each diagonal element of $\mathbf{\Sigma}(t)$ is non-positive. Recall that $\mathbf{D}(t) = \mathsf{exp}(\int_0^t\mathbf{\Sigma}(t')dt')$, where $\mathbf{D}(t)$ comes from the eigen-decomposition of $\bar{\mathbf{Q}}_{t|0} = \mathbf{V}\mathbf{D}(t)\mathbf{V}^{-1}$ as in \eqref{eq:ctmc transition matrix}. As a result, the integral $\int_0^t \mathbf{\Sigma}(t')dt'$ is non-increasing in $t$, and thus each diagonal element $[\mathbf{D}(t)]_{i, i}$ is also non-increasing in $t$. Hence, the sequence $\{[\mathbf{D}(t_k)]_{i,i}\}_{k=0}^T$ is non-increasing for $i\in[1{:}M]$.

\section{Proof of Lemma \ref{lm:maximum eigenvalue}}\label{apdx:maximum eigenvalue}
Let $\mathbf{P}\in\mathbb{R}^{M\times M}$ be a row-stochastic matrix, i.e., $[\mathbf{P}]_{j, :}\mathbf{1}_M = 1$ for $j\in[1{:}M]$, and suppose it admits the eigen-decomposition $\mathbf{V}\mathbf{\Lambda}\mathbf{V}^{-1}$. For contradiction, suppose that there exists an eigenvalue $[\mathbf{\Lambda}]_{i, i} > 1$ for some $i\in[1{:}M]$, where $[\mathbf{V}]_{:, i}$ is the corresponding eigenvector. Without loss of generality, let $j$ be an index such that $[\mathbf{V}]_{j, i} = \max_{k\in[1{:}M]} [\mathbf{V}]_{k, i} > 0$.
Then, by the eigenvalue relation, we have $[\mathbf{P}]_{j, :}[\mathbf{V}]_{:, i} = [\mathbf{V}]_{j, i}[\mathbf{\Lambda}]_{i, i} > [\mathbf{V}]_{j, i}$, which contradicts $[\mathbf{P}]_{j, :}[\mathbf{V}]_{:, i} \leq [\mathbf{P}]_{j, :}([\mathbf{V}]_{j, i}\mathbf{1}_M) = [\mathbf{V}]_{j, i}([\mathbf{P}]_{j, :}\mathbf{1}_M) = [\mathbf{V}]_{j, i}$. Therefore, no eigenvalue of $\mathbf{P}$ can exceed $1$, and since $\mathbf{P} \mathbf{1}_M = 1\cdot\mathbf{1}_M$, the largest eigenvalue is exactly $1$.

\section{Noise Scheduling Method}\label{apdx:noise scheduling method}
Our noise scheduling follows the sigmoid-based formula presented in Algorithm 4 of \cite{jabri2022scalable}. The standard form of the scheduling function for $0 \leq t \leq 1$ is expressed as
\begin{align}\label{eq:sigmoid scheduling}
    \omega_t = \frac{\mathsf{sig}(t(\nu_{\mathsf{end}} - \nu_{\mathsf{start}}) + \nu_{\mathsf{start}}) - \mathsf{sig}(\nu_{\mathsf{start}})}{\mathsf{sig}(\nu_{\mathsf{end}}) - \mathsf{sig}(\nu_{\mathsf{start}})},
\end{align}
where $\mathsf{sig}(x) = {1}/({1 + e^{-x}})$ denotes the sigmoid function, and $\nu_{\mathsf{start}}$ and $\nu_{\mathsf{end}}$ control the curvature of the scheduling curve. We employ this formulation to design a noise scheduling function that outputs the SNR $\bar{\eta}_{t_k}$ in dB scale at time points $t_k$ with $k\in[1{:}T]$, where $t_0 = 0$ and $0 < t_k \leq 1$. Accordingly, for $k\in[1{:}T]$, $\bar{\eta}_{t_k}$ in dB scale is defined as
\begin{align}
   \bar{\eta}_{t_k} [\text{dB}] = \xi_1\times10\cdot\mathsf{log}_{10}\!\left(\frac{1 - \omega_{t_{k - 1}}}{\omega_{t_{k - 1}}}\right) + \xi_2,
\end{align}
where $\xi_1$ and $\xi_2$ are scaling and offset parameters, respectively. The relationship between $\bar{\eta}_{t_k} [\text{dB}]$ and the noise variance sequence $\{\sigma_{t_k}\}_{k = 1}^T$ is given by $\bar{\eta}_{t_k} [\text{dB}] = 10\cdot\mathsf{log}_{10}(P/\bar{\sigma}_{t_k}^2)$, where $\bar{\sigma}_{t_k}^2 = \sum\nolimits_{k' = 1}^k\sigma_{t_{k'}}^2$. In our experiments, we set $\nu_{\mathsf{start}} = 0.025$, $\nu_{\mathsf{end}} = 1.25$, $\xi_1 = 0.45$ and $\xi_2 = 6.5$ for 16-QAM, and $\nu_{\mathsf{start}} = 0.0125$, $\nu_{\mathsf{end}} = 0.625$, $\xi_1 = 0.7$ and $\xi_2 = 11$ for 64-QAM.

%

\bibliographystyle{IEEEtran}  
\bibliography{IEEEabrv,reference}

\end{document}